\newcommand\reallywidehat[1]{%
\savestack{\tmpbox}{\stretchto{%
  \scaleto{%
    \scalerel*[\widthof{\ensuremath{#1}}]{\kern-.6pt\bigwedge\kern-.6pt}%
    {\rule[-\textheight/2]{1ex}{\textheight}}
  }{\textheight}%
}{0.5ex}}%
\stackon[1pt]{#1}{\tmpbox}%
}
\theoremstyle{plain}
\newtheorem{theorem}{Theorem}
\newtheorem{prop}[theorem]{Proposition}
\newtheorem{lemma}[theorem]{Lemma}
\newtheorem{coro}[theorem]{Corollary}
\theoremstyle{definition}
\newtheorem{definition}[theorem]{Definition}
\newtheorem{remark}[theorem]{Remark}
\newtheorem{example}[theorem]{Example}
\newcommand{\Z}{{\mathbb Z}}
\newcommand{\R}{{\mathbb R}}
\newcommand{\N}{{\mathbb N}}
\newcommand{\C}{{\mathbb C}}
\newcommand{\im}{{\mathrm{i}}}
\newcommand{\e}{\operatorname{e}}
\newcommand{\mc}{\mathcal}
\newcommand{\eps}{\varepsilon}
\newcommand{\cM}{{\mathcal M}}
\newcommand{\cR}{{\mathcal R}}
\newcommand{\WAP}{\mathcal{W}\hspace*{-1pt}\mathcal{AP}}
\newcommand{\SAP}{\mathcal{S}\hspace*{-2pt}\mathcal{AP}}
\newcommand{\exend}{\hfill $\Diamond$}
\begin{document}

\title{A note on measures
  vanishing at infinity}

\author{Timo Spindeler}
\address{Fakult\"at f\"ur Mathematik, Universit\"at Bielefeld, \newline
\hspace*{\parindent}Postfach 100131, 33501 Bielefeld, Germany}
\email{tspindel@math.uni-bielefeld.de}

\author{Nicolae Strungaru}
\address{Department of Mathematical Sciences, MacEwan University \\
10700 -- 104 Avenue, Edmonton, AB, T5J 4S2\\
and \\
Department of Mathematics\\
Trent University \\
Peterborough, ON
and \\
Institute of Mathematics ``Simon Stoilow''\\
Bucharest, Romania}
\email{strungarun@macewan.ca}
\urladdr{http://academic.macewan.ca/strungarun/}

\begin{abstract}
In this paper, we review the basic properties of measures vanishing at infinity and prove a version of the Riemann--Lebesgue lemma for Fourier transformable measures.
\end{abstract}

\keywords{Fourier--Stieltjes transform, unbounded measures, Riemann--Lebesgue lemma}

\subjclass[2010]{43A25, 52C23}

\maketitle

\section{Introduction}

Physical diffraction plays a central role in the study of the atomic structure of solids. The discovery of quasicrystals by Shechtman \cite{She} in 1984 led to an increased interest into structures which fail to be periodic, but still exhibit long range order, typically shown by a large Bragg spectrum.

Mathematically, the diffraction is described as follows: given a point set $\Lambda$, representing the positions of atoms in an ideal crystal, or more generally a measure $\mu$, we construct a new measure $\gamma$ called the \emph{autocorrelation} of $\Lambda$ or $\mu$ respectively. This measure is  positive definite and hence Fourier transformable \cite{ARMA1,BF,ARMA}.
Its Fourier transform $\widehat{\gamma}$ is a positive measure which models the diffraction of our structure \cite{HOF,TAO}.

As any measure $\widehat{\gamma}$ has a Lebesgue decomposition
\[
\widehat{\gamma}=\widehat{\gamma}_{\text{pp}}+\widehat{\gamma}_{\text{ac}}+\widehat{\gamma}_{\text{sc}} \,,
\]
with respect to the Haar measure $\theta_{\widehat{G}}$ into the pure point, absolutely continuous and singular continuous components.

In many particular cases, we know quite a bit about the pure point and continuous spectrum. One way to get information about these components is via the \emph{Eberlein decomposition}: the measure $\gamma$ can be decomposed \cite{ARMA,MoSt,NS4} into two positive definite measures $\gamma_{\mathsf{s}}$ and $\gamma_{0}$, called the {\it strong} and {\it null weakly almost periodic} parts of $\gamma$, such that
\[
\reallywidehat{\gamma_{\mathsf{s}}} =\left( \widehat{\gamma} \right)_{\text{pp}} \quad \mbox{ and } \quad \reallywidehat{\gamma_{0}} =\left( \widehat{\gamma} \right)_{\text{c}} \,.
\]
By using this approach, progress has been made towards understanding the pure point and continuous spectra of measures with lattice support \cite{BA,TAO}, and with Meyer set support \cite{LS2,NS1,NS2,NS5,NS11} and few examples of measures with FLC support \cite{LS2,NS4}. More recently, this decomposition has been studied via spectral decomposition of dynamical systems \cite{JBA}.

These methods also led to a good understanding of models with pure point diffraction. Pure point diffraction was characterised in terms of almost periodicity of the autocorrelation measure  \cite{BM,Gouere-1,ARMA,MoSt,NS3}, in terms of almost periodicity of the underlying structure \cite{MS} or in terms of the pure point dynamical spectrum \cite{BL,LMS1}. Various systems with pure point diffraction, such as regular model sets and weighted Dirac combs with Meyer set support \cite{BM,HOF,LR,CR,CRS,CRS2,NS11,Martin2}, weak model sets of maximal density \cite{BHS,KR}, stationary processes \cite{DM,LM} and various deformations \cite{BL2,LS}, or almost periodic measures \cite{LS2} have been studied. For more general overviews of these results we recommend \cite{BG1,TAO,DL3,BL3,BG2}.

In contrast, little is known in general about the continuous spectra and systems with purely continuous spectrum. The decomposition
\[
\widehat{\gamma}_{\text{c}}= \widehat{\gamma}_{\text{ac}} + \widehat{\gamma}_{\text{sc}} \,,
\]
is only understood in few particular examples, and almost nothing is known in general about systems which don't have some underlying lattice structure. For example, the pinwheel tiling is known to have only one trivial Bragg peak \cite{MSP}, therefore the diffraction is essentially continuous, but nothing else is known about the structure of this diffraction, except its circular symmetry.

One interesting example for which the continuous spectrum is well understood is the Thue--Morse model, see \cite{BG2,Que} for details.

The study of the Thue--Morse example is simplified by the fact that this model leads to a class of weighted combs with lattice support. Moreover, by choosing the right weights, we can get a measure $\omega$ whose diffraction is exactly the continuous diffraction spectra of the Thue--Morse. This also implies that the autocorrelation $\gamma$ of $\omega$ is a null-weakly almost periodic measure with lattice support. A relatively simple computation shows that the autocorrelation coefficients of $\gamma$ are not vanishing at infinity \cite{BG1,BG2}. Since $\gamma$ is supported on a lattice, the Riemann--Lebesgue lemma can be applied to conclude that the diffraction cannot be absolutely continuous, and a relatively simple argument can be then used to conclude that $\widehat{\gamma}_{\text{ac}}=0$ \cite{BG1,BG2}.

The Thue--Morse example can be generalised to a large class of similar models \cite{Kea}. By a similar argument, all these models have been shown to have purely singularly continuous diffraction for a suitable choice of the weights \cite{BG3,BGG,TAO}. Other similar models with mixed singular spectrum have been studied \cite{BG,BGa}.

In all these examples, the key is the fact that, for a measure supported on a lattice with coefficients not vanishing at infinity, the Riemann--Lebesgue lemma can be applied to conclude that the Fourier transform cannot be absolutely continuous.

It is worth emphasising that this approach does not work for Fourier transformable measures which are not supported on lattices. It is our goal to give a version of the Riemann--Lebesgue lemma for arbitrary Fourier transformable measures, and to show that for measures with lattice support this version is equivalent to the standard one.

The paper is organised as it follows: In Section \ref{sec:measures van at infinity}, we recall the notion of measures vanishing at infinity and study the basic properties of this class. Next, in Section \ref{sect:On the}, we provide the main theorem of the paper, the Riemann--Lebesgue lemma for measures together with few consequences. We also recall a classical result about absolutely continuous positive definite measures, which complements our main result. Further examples are provided in Section \ref{sect:exa}. We complete the paper by taking a close look, in Section \ref{sect:raj}, at the Fourier duality between Fourier transformable measures vanishing at infinity and Rajchman measures.

\newpage

\section{Preliminaries}

For the entire paper, $G$ denotes a $\sigma$-compact LCA group. We will denote by $C_U(G)$ the space of uniformly continuous and bounded functions on $G$. $C_0(G)$ will represent the subspace of $C_U(G)$ consisting of functions vanishing at infinity. Let us recall that $f$ is said to be vanishing at infinity if, for each $\eps>0$, there exists a compact set $K_{\eps}$ such that
\begin{displaymath}
\left| f(x) \right| < \eps \quad \text{for all $x \notin K_{\eps}$}.
\end{displaymath}
$C_U(G)$ is a Banach space with respect to the $\| . \|_\infty$-norm, and $C_0(G)$ is a closed subspace of $C_U(G)$.

\medskip

As usual, $C_c(G)$ denotes the subspace of $C_U(G)$ of functions with compact support. This is not closed with respect to $\| . \|_\infty$, but it is closed with respect to the \emph{inductive topology}, which is defined as follows: For each compact set $K \subseteq G$, the space
\begin{displaymath}
C(G:K):= \{ f \in C_c(G)\, |\, \text{supp}(f)\subseteq K \}
\end{displaymath}
is a Banach subspace of $C_U(G)$. As we have
\begin{displaymath}
C_c(G) = \bigcup_{\substack{K \subseteq G \\ K \mbox{ compact }}} C(G:K) \, ,
\end{displaymath}
$C_c(G)$ can be seen as the inductive limit of the spaces $C(G:K)$. The topology defined by this inductive limit is called the inductive topology.

\medskip

By the Riesz--Markov theorem (see \cite[Appendix]{CRS2} for details), the space of regular Radon measures on $G$ can be identified with the dual space of $C_c(G)$, when this is equipped with the inductive limit topology. Because of this, we will think of measures as continuous linear functionals on $C_c(G)$, and denote the duality between a measure $\mu$ and a function $f \in C_c(G)$ by
\begin{displaymath}
\left\langle f, \mu\right\rangle\ := \int_G f d \mu \,.
\end{displaymath}
The convolution between two functions $f,g \in C_c(G)$ is defined via
\begin{displaymath}
(f*g)(x) =\int_G f(x-t) g(t)\, dt =\int_G f(t) g(x-t)\, dt.
\end{displaymath}
Similarly, for a measure $\mu$ and a function $f \in C_c(G)$, we define the convolution $\mu*f$ by
\begin{displaymath}
(\mu*f)(x)= \int_G f(x-t)\, d \mu(t) .
\end{displaymath}

\medskip

Let us next recall the definition of a translation bounded measure.

\begin{definition} A measure $\mu$ is called \emph{translation bounded} if for all $f \in C_c(G)$ we have $\mu*f \in C_U(G)$.

We denote the space of translation bounded measures by $\cM^\infty(G)$.
\end{definition}

\begin{remark} By \cite{ARMA1}, a measure $\mu$ is translation bounded if and only if, for all compact sets $K \subseteq G$, we have
\begin{displaymath}
\|  \mu \|_K := \sup_{x \in G} \left| \mu \right| (x+K) < \infty \,,
\end{displaymath}
where $\left| \mu \right|$ denotes the variation measure for $\mu$ (see \cite[Sec. 6.5]{Ped} for details).

In \cite{BM}, the authors showed that it suffices to check that $\| \mu \|_K < \infty$ for a single compact set with non-empty interior.
\exend
\end{remark}

\medskip
Next, we give a brief review of Fourier transformability for measures. For a more detailed overview, we refer the reader to \cite{MoSt}.

As usual, $K_2(G)$ is defined as the subspace of $C_c(G)$ spanned by the set $\{f*g\ |\ f,g\in C_c(G)\}$.
Note that, by the polarisation identity, we have
\begin{displaymath}
K_2(G) = \mbox{Span} \{ f*\widetilde{f}\, |\, f \in C_c(G) \} \,.
\end{displaymath}

\begin{definition} A measure $\mu$ is called \emph{Fourier transformable} if there exists a measure $\widehat{\mu}$ on $\widehat{G}$ such that, for all $f \in C_c(G)$ we have $\widecheck{f} \in L^2( \widehat{\mu})$ and
\begin{displaymath}
\left\langle f*\widetilde{f}, \mu\right\rangle \ =\ \left\langle \big| \widecheck{f}\, \big|^2 , \widehat{\mu} \right\rangle \,.
\end{displaymath}
\end{definition}

Note that the definition is equivalent to $\widecheck{g} \in L^1( \widehat{\mu})$ for all $g \in K_2(G)$ together with
\begin{displaymath}
\left\langle g, \mu \right\rangle\ =\ \left\langle  \widecheck{g} , \widehat{\mu} \right\rangle \,.
\end{displaymath}

\medskip

We complete this section by briefly reviewing the notions of strongly, weakly and null weakly-almost periodic measures and their relation to the Fourier transform of measures. For a comprehensive review of this, we refer the reader to \cite{MoSt}.

As the definitions for measures are obtained by convolutions from the equivalent notions for functions, we start by introducing these concepts for functions.

\begin{definition} A function $f \in C_U(G)$ is called \emph{strongly almost periodic} or \emph{Bohr almost periodic} if the hull $\{ T_t f\, |\, t \in G \}$
is pre-compact in $(C_U(G), \| . \|_\infty)$, where the translation $T_tf$ is defined via
\begin{displaymath}
T_tf(x):=f(x-t) \,.
\end{displaymath}

Further, $f \in C_U(G)$ is called \emph{weakly almost periodic} if the hull $\{ T_t f\, |\, t \in G \}$
is pre-compact in the weak topology of $(C_U(G), \| . \|_\infty)$.

We denote by $S\hspace*{-1pt}AP(G)$ and $W\hspace*{-2pt}AP(G)$ the spaces of strongly respectively weakly almost periodic functions.
\end{definition}

It is obvious that $S\hspace*{-1pt}AP(G) \subseteq W\hspace*{-2pt}AP(G)$.

To define null weakly almost periodicity, we first need  to define the concept of a mean.

\begin{prop}\cite{Eb,MoSt} Let $f \in W\hspace*{-2pt}AP(G)$ and $(A_n)_{n\in\N}$ be a van Hove sequence. Then, the following limit exists:
\begin{displaymath}
 \lim_{n\to\infty} \frac{1}{|A_n|} \int_{A_n} f(x)\, dx \, =: \, M(f) \,.
\end{displaymath}
\end{prop}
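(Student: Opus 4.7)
The plan is to view the Ces\`aro averages $g_n(x) := |A_n|^{-1}\int_{A_n} f(x+t)\, dt$ as elements of the weakly compact convex set $C := \overline{\mathrm{co}}^{\,w}\{T_t f\, |\, t \in G\} \subseteq C_U(G)$, argue via the van Hove property that every weak cluster point of $(g_n)_n$ is a constant function, and then use a translation-invariant mean on $W\hspace*{-2pt}AP(G)$ to show all such constants agree. Evaluating at $0$ then yields the claim, since $g_n(0) = |A_n|^{-1}\int_{A_n} f(t)\, dt$.

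First, since $f \in W\hspace*{-2pt}AP(G)$, the orbit $\{T_t f\, |\, t\in G\}$ is weakly relatively compact in $(C_U(G),\|\cdot\|_\infty)$, and Krein's theorem gives that $C$ is weakly compact. Each $g_n$ is a Bochner average of translates of $f$, so $g_n \in C$ (one checks that $t \mapsto T_t f$ is norm continuous using the uniform continuity of $f$). For fixed $s \in G$, the van Hove property gives $|A_n \triangle (A_n - s)|/|A_n| \to 0$, and the uniform boundedness of $f$ yields
\[
\|g_n - T_s g_n\|_\infty \;\le\; \frac{2\|f\|_\infty \, |A_n \triangle (A_n-s)|}{|A_n|} \;\longrightarrow\; 0.
\]
Hence any weak cluster point $g \in C$ of $(g_n)_n$ satisfies $T_s g = g$ for every $s \in G$, so $g$ is a constant function.

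The delicate step, and main obstacle, is showing that any two such constants coincide: the van Hove argument alone only produces \emph{some} constant cluster point, and different subsequences could a priori yield different constants. To resolve this I would invoke the Markov--Kakutani fixed point theorem: since $G$ is abelian, the translations act as commuting weak-$*$ continuous affine self-maps on the weak-$*$ compact convex set of states on $W\hspace*{-2pt}AP(G)$, producing a translation-invariant mean $m$. As a bounded linear functional, $m$ is weakly continuous on $C_U(G)$; its translation invariance forces $m(h) = m(f)$ for every $h$ in the convex hull of $\{T_t f\}$, and by weak continuity this extends to every $h \in C$. If $g \equiv c$ is a constant cluster point, then $c = m(g) = m(f)$, so every weak cluster point equals $M(f) := m(f)$. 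Weak compactness of $C$ then upgrades this to weak convergence $g_n \to M(f)$, and since evaluation at $0$ is weakly continuous on $C_U(G)$, one concludes $|A_n|^{-1}\int_{A_n} f(t)\, dt = g_n(0) \to M(f)$.
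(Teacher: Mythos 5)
The paper states this proposition as a cited result from Eberlein and Moody--Strungaru and gives no proof of its own, and your argument is a correct, self-contained rendition of exactly the classical proof used in those references: Krein--\v{S}mulian compactness of the weakly closed convex hull $C$ of the orbit, the van Hove (hence F{\o}lner) estimate forcing every weak cluster point of the averages to be a translation-invariant, hence constant, function, and a translation-invariant mean (via Markov--Kakutani) to show all such constants coincide. I see no gaps; the only implicit ingredients worth flagging are Mazur's theorem (so the normalized Bochner averages indeed land in $C$), Eberlein--\v{S}mulian (to pass freely between nets and sequences in the weakly compact set $C$), and the observation that $C \subseteq W\hspace*{-2pt}AP(G)$ with $m(\mathbf{1})=1$, so the invariant mean is defined on the constant cluster points and evaluates them correctly.
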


\begin{definition} For $f \in W\hspace*{-2pt}AP(G)$, the number $M(f)$ is called the \emph{mean of $f$}.

A function $f$ is called \emph{null weakly almost periodic} if $f \in W\hspace*{-2pt}AP(G)$ and $M(|f|)=0$. We denote the space of null weakly almost periodic functions by $W\hspace*{-2pt}AP_0(G)$.
\end{definition}

Now, we can carry these definitions to measures.

\begin{definition} A measure $\mu \in \cM^\infty(G)$ is called \emph{strongly, weakly} respectively \emph{null weakly almost periodic} if, for all $f \in C_c(G)$, the function $\mu*f$ is strongly, weakly respectively null weakly almost periodic.

We denote by $\SAP(G),\WAP(G)$ and $\WAP_0(G)$ the spaces of strongly, weakly respectively null weakly almost periodic measures.
\end{definition}

\medskip

The importance of the spaces of almost periodic measures to diffraction theory is given by the next two theorems.

\begin{theorem} \cite{ARMA,MoSt}
\begin{displaymath}
\WAP(G) = \SAP(G) \bigoplus \WAP_0(G) \,.
\end{displaymath}
In particular, any measure $\mu \in \WAP(G)$ can be written uniquely as
\begin{equation} \label{eq:h}
\mu=\mu_s+\mu_0 \quad \text{ with } \mu_s \in \SAP(G) \text{ and } \mu_0 \in \WAP_0(G) \,.
\end{equation}
\end{theorem}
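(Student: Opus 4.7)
The plan is to transfer the function-level Eberlein decomposition $WAP(G) = SAP(G) \oplus WAP_0(G)$ to the measure setting by exploiting the defining property that $\mu * f \in WAP(G)$ whenever $\mu \in \WAP(G)$ and $f \in C_c(G)$.

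I would first establish the direct-sum part by showing $\SAP(G) \cap \WAP_0(G) = \{0\}$. If $\mu$ lies in both spaces, then for every $f \in C_c(G)$ the function $\mu * f$ is simultaneously strongly and null weakly almost periodic. At the function level, if $h \in SAP(G) \cap WAP_0(G)$, then $|h|$ is also strongly almost periodic and non-negative, and the vanishing $M(|h|)=0$ combined with the equicontinuity and uniform recurrence of the orbit of $|h|$ forces $|h|\equiv 0$. Hence $\mu * f = 0$ for every $f \in C_c(G)$, which implies $\mu = 0$, giving both uniqueness of any decomposition and triviality of the intersection.

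For existence, I would define $\mu_s$ via the formula
\[
\langle f, \mu_s \rangle \;:=\; (\mu * f^{\vee})_{s}(0), \qquad f \in C_c(G),
\]
where $f^{\vee}(t) = f(-t)$ and $(\cdot)_s$ is the Eberlein projection from $WAP(G)$ onto $SAP(G)$ at the function level. Linearity in $f$ is immediate. For continuity in the inductive topology, fix a compact $K \subseteq G$: for $\operatorname{supp}(f) \subseteq K$, translation boundedness yields $\|\mu * f^{\vee}\|_\infty \leq \|\mu\|_{-K}\|f\|_\infty$, while the fact that $h_s$ belongs to the weak closure of the convex hull of $\{T_t h \,:\, t \in G\}$ forces the SAP projection on $(WAP(G),\|\cdot\|_\infty)$ to have operator norm at most one. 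Combining these gives $|\langle f, \mu_s\rangle| \leq \|\mu\|_{-K}\|f\|_\infty$, so $\mu_s$ is a well-defined regular Radon measure on $G$.

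Finally I would verify that $\mu_s * f = (\mu * f)_s$ for every $f \in C_c(G)$. Writing $(\mu_s * f)(x) = \langle g_x, \mu_s\rangle$ with $g_x(t) = f(x-t)$, one checks $g_x^{\vee} = T_{-x}f$, and then the identity follows from the fact that both convolution and the SAP projection commute with translation: $(\mu * T_{-x}f)_s(0) = (T_{-x}(\mu * f))_s(0) = (\mu * f)_s(x)$. This identity simultaneously shows that $\mu_s * f \in SAP(G)$ for all $f$, hence $\mu_s \in \SAP(G)$, and that $\|\mu_s * f\|_\infty \leq \|\mu * f\|_\infty$, giving translation boundedness of $\mu_s$. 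Setting $\mu_0 := \mu - \mu_s$, the analogous identity $\mu_0 * f = (\mu * f)_0 \in WAP_0(G)$ yields $\mu_0 \in \WAP_0(G)$. The main obstacle is the continuity of the SAP-projection as a norm-one operator on $WAP(G)$; this boundedness, which is essentially the content of the function-level Eberlein decomposition, is what allows $\langle \cdot, \mu_s \rangle$ to be continuous in the inductive topology and underpins the entire transfer from functions to measures.
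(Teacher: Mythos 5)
This theorem is not proved in the paper: it is quoted verbatim from the cited references (de Lamadrid--Argabright and Moody--Strungaru), so there is no in-paper argument to compare against. Your proposal is correct and reproduces, in essence, the standard proof from those sources: defining $\mu_s$ by $\langle f,\mu_s\rangle=(\mu*f^{\vee})_s(0)$, using the norm-one Eberlein projection at the function level to get continuity in the inductive topology, and then verifying $\mu_s*f=(\mu*f)_s$ via translation-equivariance; the uniqueness argument through $M(|h|)=0$ forcing a non-negative Bohr almost periodic function to vanish is likewise the standard one.
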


We will refer to the decomposition $\mu=\mu_s+\mu_0$ as the \emph{Eberlein decomposition} for weakly almost periodic measures.

\begin{theorem} \cite{MoSt}
Let $\mu \in \cM^\infty(G)$ be Fourier transformable. Then $\mu \in \WAP(G)$, $\mu_s, \mu_0$ are Fourier transformable, and
\begin{displaymath}
\reallywidehat{\mu_s}= (\widehat{\mu})_{pp} \, ; \quad \reallywidehat{\mu_0}= (\widehat{\mu})_{c} \,.
\end{displaymath}
\end{theorem}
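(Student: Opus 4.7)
The plan is to reduce the entire statement to classical facts about Fourier--Stieltjes transforms of \emph{finite} complex measures on $\widehat{G}$. The key bridge is the identity
\[
(\mu * f * \widetilde{f})(x) \;=\; \int_{\widehat{G}} \chi(x)\,|\widecheck{f}(\chi)|^{2}\, d\widehat{\mu}(\chi), \qquad f \in C_c(G), \; x \in G,
\]
which follows from the definition of Fourier transformability applied to the translate $T_x(f * \widetilde{f}) \in K_2(G)$, together with the elementary computation $\widecheck{T_x(f*\widetilde{f})}(\chi) = \chi(x)\,|\widecheck{f}(\chi)|^{2}$. Since $|\widecheck{f}|^{2} \in L^{1}(\widehat{\mu})$ by the definition, $\nu_f := |\widecheck{f}|^{2}\,\widehat{\mu}$ is a \emph{finite} complex measure on $\widehat{G}$, and the identity says precisely that $\mu*(f*\widetilde{f})$ is its inverse Fourier--Stieltjes transform.

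\textbf{Weak almost periodicity of $\mu$.} A classical theorem of Eberlein asserts that the inverse Fourier--Stieltjes transform of any finite complex measure on $\widehat{G}$ lies in $W\hspace*{-2pt}AP(G)$. Applied to $\nu_f$, this gives $\mu*(f*\widetilde{f}) \in W\hspace*{-2pt}AP(G)$ for every $f \in C_c(G)$. By the polarisation identity for $K_2(G)$ recorded in the paper, this extends to $\mu*g \in W\hspace*{-2pt}AP(G)$ for all $g \in K_2(G)$; density of $K_2(G)$ in $C_c(G)$ in the inductive topology, together with closedness of $W\hspace*{-2pt}AP(G)$ in $(C_U(G),\|\cdot\|_\infty)$, extends this to every $g \in C_c(G)$. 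Hence $\mu \in \WAP(G)$, and the Eberlein decomposition yields $\mu_s \in \SAP(G)$, $\mu_0 \in \WAP_0(G)$ with $\mu = \mu_s + \mu_0$.

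\textbf{Identifying the Fourier transforms of the pieces.} For $f \in C_c(G)$ split $\nu_f$ according to the Lebesgue decomposition of $\widehat{\mu}$:
\[
\nu_f \;=\; |\widecheck{f}|^{2}\,(\widehat{\mu})_{pp} \;+\; |\widecheck{f}|^{2}\,(\widehat{\mu})_{c} \;=:\; \nu_f^{pp} + \nu_f^{c},
\]
and let $h_{pp}$, $h_c$ denote the inverse Fourier--Stieltjes transforms of $\nu_f^{pp}$ and $\nu_f^{c}$, so that $\mu*(f*\widetilde{f}) = h_{pp} + h_c$. Since $\nu_f^{pp}$ is a finite pure point measure, $h_{pp}$ is a uniformly convergent trigonometric series on $G$, hence $h_{pp} \in S\hspace*{-1pt}AP(G)$. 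If also $h_c \in W\hspace*{-2pt}AP_0(G)$, the uniqueness of the Eberlein decomposition forces $\mu_s*(f*\widetilde{f}) = h_{pp}$ and $\mu_0*(f*\widetilde{f}) = h_c$, and letting $f$ range over $C_c(G)$ then delivers Fourier transformability of $\mu_s$ and $\mu_0$ together with $\reallywidehat{\mu_s}= (\widehat{\mu})_{pp}$ and $\reallywidehat{\mu_0}= (\widehat{\mu})_{c}$.

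\textbf{The main step: $h_c \in W\hspace*{-2pt}AP_0(G)$.} This is the crux of the argument. Since $h_c \in W\hspace*{-2pt}AP(G)$ by the first step, the mean $M(|h_c|^{2})$ exists, and I would compute it via Wiener's formula for finite measures on a $\sigma$-compact LCA group,
\[
M(|\widehat{\nu}|^{2}) \;=\; \sum_{\chi \in \widehat{G}} |\nu(\{\chi\})|^{2},
\]
applied with $\nu = \nu_f^{c}$. Since $\nu_f^{c}$ is continuous the right-hand side vanishes, so $M(|h_c|^{2})=0$, and then $M(|h_c|)=0$ by Cauchy--Schwarz. Establishing Wiener's formula in this generality is the main obstacle: one must control $|A_n|^{-1}\int_{A_n}|\widehat{\nu}(x)|^{2}\,dx$ along a van Hove sequence by a Fubini argument on $\widehat{G}\times\widehat{G}$, showing that $|A_n|^{-1}\int_{A_n} (\chi\eta^{-1})(x)\,dx$ converges in a dominated fashion to $\mathbbm{1}_{\{\chi=\eta\}}$. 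Once this formula is in hand, the remainder of the theorem is bookkeeping around the Eberlein decomposition and the density of $K_2(G)$ in $C_c(G)$.
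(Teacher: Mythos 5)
The paper states this theorem without proof, citing \cite{MoSt}, so there is no in-paper argument to compare against; your proposal reconstructs the standard proof of Gil de Lamadrid--Argabright and Moody--Strungaru and is essentially correct. The three ingredients you isolate --- the identity exhibiting $\mu*(f*\widetilde{f})$ as the inverse Fourier--Stieltjes transform of the finite measure $|\widecheck{f}|^2\,\widehat{\mu}$, Eberlein's theorem that such transforms are weakly almost periodic (with the passage from $K_2(G)$ to $C_c(G)$ best done by the approximate-identity argument of Lemma~\ref{lem:a} rather than by appeal to density alone), and Wiener's formula $M(|\widehat{\nu}|^2)=\sum_{\chi}|\nu(\{\chi\})|^2$ along van Hove sequences, proved by Fubini together with $|A_n|^{-1}\int_{A_n}\chi\to 0$ for nontrivial $\chi$ --- are exactly the ones used in the cited source.
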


\section{Measures vanishing at infinity}\label{sec:measures van at infinity}

In this section, we define and study the basic properties of measures vanishing at infinity. Later, we will need these measures for the formulation of the Riemann--Lebesgue lemma for measures. Some of the results of this section are stated (without proof) in \cite[pp. 5-6]{BF}.

\begin{definition} \cite[Def.~1.14]{BF}
A measure $\mu\in\mc M(G)$ is \emph{vanishing at infinity} if
\begin{displaymath}
\mu*f \in C_0(G) \quad \text{ for all } f\in C_c(G).
\end{displaymath}
In this case, we write $\mu\in \mc M_0^{\infty}(G)$.
\end{definition}

\smallskip

The following observation is an exercise in \cite{BF} and is trivial to prove. 

\begin{remark}\label{rem:BF exe 1.17}\cite[Exercise~1.17]{BF} Let $\mu$ be a positive measure. Then $\mu \in  \mc M_0^{\infty}(G)$ if and only if for all compact sets $K \subset G$ the function $x \to \mu(x+K)$ is vanishing at infinity.
\end{remark}

Exactly as for translation boundedness, it is immediate to see that it suffices to check the condition from Remark~\ref{rem:BF exe 1.17} for one compact sets $K$ with non-empty interior. 

\medskip

We next look at a basic example of a measure vanishing at infinity, which can be found in the literature in various places \cite{TAO,NS1,NS2,NS5}:

\begin{example}\label{ex:a} Let
\begin{displaymath}
 \mu = \displaystyle\sum_{\substack{n \in \Z \\ n\neq 0}} \left( \delta_{n+\frac1n}-\delta_n \right)  \,.
\end{displaymath}
Then $\mu \in \cM_0^\infty(G)$. For a proof we refer to \cite[Ex. 5.8]{NS1}.
\end{example}

\medskip

When we deal with the Fourier transform of measures, we need to deal with test functions in $K_2(G)$ instead of $C_c(G)$. We start by proving that the definition is the same if one uses $K_2(G)$ as the space of test functions.

\begin{lemma} \label{lem:a}
Let $\mu\in\mc M^{\infty}(G)$. Then, $\mu\in\mc M_0^{\infty}(G)$
if and only if
\begin{equation} \label{eq:c}
\mu*f \in C_0(G) \quad \text{ for all } f\in K_2(G).
\end{equation}
\end{lemma}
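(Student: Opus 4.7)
The forward implication is immediate: every element of $K_2(G)$ is, by definition, a finite linear combination of convolutions $g*h$ with $g,h\in C_c(G)$, and such convolutions lie in $C_c(G)$ (with support in $\mathrm{supp}(g)+\mathrm{supp}(h)$). Hence $K_2(G)\subseteq C_c(G)$, so if $\mu*f\in C_0(G)$ for all $f\in C_c(G)$, the same holds a fortiori for $f\in K_2(G)$.

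For the converse, my plan is to use an approximate identity argument combined with the fact that $C_0(G)$ is norm-closed in $C_U(G)$. Fix $f\in C_c(G)$. Since $G$ is $\sigma$-compact, we can select a sequence $(e_n)_{n\in\N}\subseteq C_c(G)$ of nonnegative functions with $\int_G e_n\,dx=1$ and with $\mathrm{supp}(e_n)$ shrinking to $\{0\}$. Then $f*e_n\in C_c(G)$, and since it is a single convolution of two elements of $C_c(G)$, we have $f*e_n\in K_2(G)$. By hypothesis,
\begin{displaymath}
\mu*(f*e_n)\in C_0(G)\quad\text{for every }n\in\N.
\end{displaymath}

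Next I would invoke associativity of the convolution, namely $\mu*(f*e_n)=(\mu*f)*e_n$, which is a direct Fubini computation using that $f,e_n\in C_c(G)$ and $\mu$ is a Radon measure. Because $\mu\in\cM^{\infty}(G)$, the function $u:=\mu*f$ lies in $C_U(G)$, and the standard approximate-identity estimate
\begin{displaymath}
\bigl|(u*e_n)(x)-u(x)\bigr|\ \leq\ \sup_{t\in\,\mathrm{supp}(e_n)}|u(x-t)-u(x)|
\end{displaymath}
combined with the uniform continuity of $u$ gives $(\mu*f)*e_n\to\mu*f$ in $\|\cdot\|_\infty$. Since each $(\mu*f)*e_n$ lies in the norm-closed subspace $C_0(G)\subseteq C_U(G)$, the limit $\mu*f$ also lies in $C_0(G)$, completing the proof.

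I do not anticipate a real obstacle: the only point to be careful about is the associativity identity $\mu*(f*e_n)=(\mu*f)*e_n$ and the existence of a countable approximate identity in $C_c(G)$, both of which are standard for $\sigma$-compact LCA groups. The argument generalizes verbatim if one replaces $K_2(G)$ by any convolution-stable subspace of $C_c(G)$ that contains all functions of the form $f*e$ with $f\in C_c(G)$ and $e$ an element of an approximate identity.
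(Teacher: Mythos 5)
Your proof is correct and follows essentially the same route as the paper: reduce the converse to the functions $f*e_n\in K_2(G)$, use associativity $\mu*(f*e_n)=(\mu*f)*e_n$, note $\mu*f\in C_U(G)$ by translation boundedness so the approximate identity converges uniformly, and conclude by the norm-closedness of $C_0(G)$ in $C_U(G)$. The only cosmetic difference is that the paper uses a net $(h_\alpha)$ as approximate identity where you use a sequence justified by $\sigma$-compactness; both work.
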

\begin{proof}
It is trivial that~(\ref{eq:c}) is necessary because $K_2(G)
\subseteq C_0(G)$. To prove that it is also sufficient, let $f\in C_c(G)$. By assumption, for all $g \in C_c(G)$ we have
\begin{displaymath}
\mu*(f*g)  \in C_0(G)  \,.
\end{displaymath}
Let $(h_{\alpha})_{\alpha}$ be an approximate identity for $(C_U(G), *)$, consisting of
functions from $C_c(G)$. Then,
\begin{equation} \label{eq:a}
(\mu*f)*h_{\alpha} = \mu*(f*h_{\alpha}) \in C_0(G) \,.
\end{equation}
The translation boundedness of $\mu$ implies $\mu*f\in C_U(G)$, therefore
\begin{equation}  \label{eq:b}
\lim_{\alpha}\ (\mu*f)*h_{\alpha}\ = \ \mu*f \quad
\text{uniformly}.
\end{equation}
Now,~(\ref{eq:a}) and~(\ref{eq:b}) together with the completeness of $C_0(G)$ imply $\mu*f\in C_0(G)$.
\end{proof}

\medskip

The rest of this section is devoted to some basic properties of measures vanishing at infinity.
As to that, the next two lemmas state useful examples of sets that contain $\cM_0^{\infty}(G)$.
First, we state a simple, but important consequence of vanishing at infinity (compare \cite[Remark 1.15]{BF}).

\begin{lemma}
Every measure $\mu$ that is vanishing at infinity is
translation bounded.
\end{lemma}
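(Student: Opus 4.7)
The plan is to unwind the definitions. Recall from the Preliminaries that $C_0(G)$ has been defined as a subspace of $C_U(G)$. Hence the hypothesis $\mu*f \in C_0(G)$ for every $f \in C_c(G)$ automatically gives $\mu*f \in C_U(G)$ for every $f \in C_c(G)$, which is precisely the definition of translation boundedness. I therefore expect the proof to be essentially a one-line unpacking of the definitions.

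Should one prefer to verify translation boundedness via the criterion recalled in the remark following Definition~3, namely $\|\mu\|_K < \infty$ for some compact $K$ with non-empty interior, the argument is still short. For a positive measure $\mu$, choose a non-negative $f \in C_c(G)$ with $f \ge \mathbf{1}_{-K}$; then a pointwise comparison yields $\mu(x+K) \le (\mu*f)(x)$ for every $x \in G$, and the hypothesis $\mu*f \in C_0(G)$ (in particular, boundedness of $\mu*f$) furnishes the required uniform bound on $\mu(x+K)$. For a general complex measure, one would reach the same conclusion by dualising: the linear map $f \mapsto \mu*f$ sends $C(G\!:\!K)$ boundedly into $C_U(G)$, and the standard duality between $C(G\!:\!K)$ and the local total variation of $\mu$ then controls $|\mu|(x+K)$ in terms of the operator norm.

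The only conceptual subtlety behind the statement is the fact that a continuous function vanishing at infinity on an LCA group is automatically uniformly continuous; the paper packages this into the very definition of $C_0(G)$ as a subspace of $C_U(G)$, which makes the lemma immediate and leaves essentially nothing to do. Hence I foresee no real obstacle.
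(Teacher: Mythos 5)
Your first paragraph is exactly the paper's proof: translation boundedness is defined by $\mu*f\in C_U(G)$ for all $f\in C_c(G)$, so the inclusion $C_0(G)\subseteq C_U(G)$ gives the claim immediately. The additional material on the $\|\mu\|_K$ criterion is correct but unnecessary; the paper stops at the one-line observation.
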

\begin{proof}
The claim follows immediately from $C_0(G)\subseteq C_U(G)$.
\end{proof}

\medskip
Next, we show that for non-compact groups the measures vanishing at infinity are automatically null weakly almost periodic.

\begin{lemma}
Let $G$ be non-compact. Then, $\mc M_0^{\infty}(G) \subseteq \WAP_0(G)$.
\end{lemma}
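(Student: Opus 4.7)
The plan is to reduce the inclusion to a statement about functions, namely $C_0(G) \subseteq W\hspace*{-2pt}AP_0(G)$ for non-compact $G$, and then verify the two ingredients of null weak almost periodicity: weak precompactness of the orbit, and vanishing of the mean of the absolute value.

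Let $\mu \in \mc M_0^{\infty}(G)$ and $f \in C_c(G)$. By definition, $\mu * f \in C_0(G)$, so the whole task reduces to showing $C_0(G) \subseteq W\hspace*{-2pt}AP_0(G)$ when $G$ is non-compact. First I would establish $C_0(G) \subseteq W\hspace*{-2pt}AP(G)$, which is a classical fact (Eberlein/Grothendieck) that can simply be cited from \cite{MoSt}. If a direct argument is preferred, one notes that any bounded net $(T_{t_{\alpha}} g)$ with $g \in C_0(G)$ admits, by a diagonal argument, a subnet whose pointwise limit is either $0$ (when the $t_{\alpha}$ eventually leave every compact set) or a translate of $g$; in both cases the limit lies in $C_0(G)$ and the convergence upgrades to weak convergence in $C_U(G)$ via dominated convergence against measures of bounded variation.

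The main work is then to show that $M(|g|) = 0$ for every $g \in C_0(G)$ when $G$ is non-compact. Fix a van Hove sequence $(A_n)$ in $G$ and $\eps > 0$. Since $g$ vanishes at infinity, there is a compact set $K \subseteq G$ such that $|g(x)| < \eps$ for $x \notin K$. Splitting the integral over $A_n$ gives
\begin{displaymath}
\frac{1}{|A_n|}\int_{A_n} |g(x)|\, dx \ \le\ \|g\|_{\infty}\, \frac{|A_n \cap K|}{|A_n|} + \eps \ \le\ \|g\|_{\infty}\, \frac{\theta_G(K)}{|A_n|} + \eps.
\end{displaymath}
Because $G$ is non-compact, any van Hove sequence satisfies $|A_n| \to \infty$, so the first term tends to $0$ and $\limsup_n \frac{1}{|A_n|}\int_{A_n} |g|\, dx \le \eps$. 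As $\eps$ was arbitrary, $M(|g|)=0$.

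Combining these two facts, $\mu * f \in C_0(G) \subseteq W\hspace*{-2pt}AP(G)$ and $M(|\mu*f|) = 0$, i.e.\ $\mu * f \in W\hspace*{-2pt}AP_0(G)$ for every $f \in C_c(G)$, which is exactly the definition of $\mu \in \WAP_0(G)$. The only step that is not a one-line bookkeeping is the mean computation, and this is handled by the van Hove splitting above; the inclusion $C_0(G)\subseteq W\hspace*{-2pt}AP(G)$ is the place where I would simply invoke the standard reference to avoid repeating Grothendieck's criterion.
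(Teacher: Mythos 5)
Your proof is correct and follows essentially the same route as the paper: cite the classical inclusion $C_0(G)\subseteq W\hspace*{-2pt}AP(G)$ and then kill the mean of $|g|$ by splitting a van Hove average over a compact set outside of which $|g|<\eps$, using $|A_n|\to\infty$ for non-compact $G$. The extra sketch of a direct argument for $C_0(G)\subseteq W\hspace*{-2pt}AP(G)$ is not needed and is the only loose part, but since you fall back on the standard reference, nothing is missing.
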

\begin{proof}
Due to \cite[Thm 11.1]{Eb} we know that
\begin{displaymath}
C_0(G) \subseteq W\hspace*{-2pt}AP(G).
\end{displaymath}
Hence, we are left to show that
\begin{displaymath}
M(|f|)=0 \quad \text{ for all } f\in C_0(G).
\end{displaymath}
Let $(A_n)_{n\in\N}$ be a van Hove sequence in $G$, see \cite{Martin2}. As $f\in C_0(G)$, for all
$\eps>0$ there is a compact set $K_{\eps}$ such that
\begin{displaymath}
|f(x)|<\eps \quad \text{ for all } x\notin K_{\eps}.
\end{displaymath}
Thus,
\begin{displaymath}
\begin{split}
M(|f|)
  &= \lim_{n\to\infty} |A_n|^{-1} \int_{A_n} |f|\, d\theta                                                          \\
  &\leq \lim_{n\to\infty} |A_n|^{-1} \int_{K_{\eps}} |f|\, d\theta\ +\
    \lim_{n\to\infty} |A_n|^{-1} \int_{A_n\setminus K_{\eps}} |f|\, d\theta                                  \\
  &\leq 0 + \eps \cdot \lim_{n\to\infty} \frac{|A_n\setminus K_{\eps}|}{|A_n|}
     \leq \eps,
\end{split}
\end{displaymath}
since $\lim_{n\to\infty} |A_n|^{-1} =0$ follows from $G$ non-compact.
\end{proof}

\medskip

On the other hand, we can find sufficient conditions for a measure to be vanishing at infinity, which are often easy to check. For instance, every finite measure is vanishing at infinity. This produces a large class of examples, which includes $L^p(G)$ for $p \geq 1$.

\begin{prop} \label{prop:b}
Every finite measure $\mu\in \mc M(G)$ is vanishing at infinity.
\end{prop}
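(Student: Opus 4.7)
The plan is to show directly that $\mu * f$ has support essentially outside of a compact set in the sense of smallness, using the two key facts that $f$ has compact support and $|\mu|$ has finite total mass.

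First I would fix $f \in C_c(G)$ with support $S := \mathrm{supp}(f)$, which is compact, and write
\begin{displaymath}
|(\mu*f)(x)|\ \le\ \int_G |f(x-t)|\, d|\mu|(t)\ \le\ \|f\|_\infty \cdot |\mu|(x - S),
\end{displaymath}
where I used that $f(x-t) = 0$ unless $t \in x - S$. This reduces the problem to showing that $x \mapsto |\mu|(x-S)$ is small outside a compact set, and since $|\mu|$ is also a finite positive measure, it is natural to work with it directly (this is essentially the characterization in Remark~\ref{rem:BF exe 1.17} applied to $|\mu|$).

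Next, given $\eps > 0$, since $|\mu|(G) < \infty$, inner regularity of the finite measure $|\mu|$ gives a compact set $K_\eps \subseteq G$ with
\begin{displaymath}
|\mu|(G \setminus K_\eps)\ <\ \frac{\eps}{\|f\|_\infty + 1}.
\end{displaymath}
Now, whenever $x \notin K_\eps + S$, the translate $x - S$ is disjoint from $K_\eps$, so $|\mu|(x-S) \le |\mu|(G \setminus K_\eps)$. Combining with the bound above,
\begin{displaymath}
|(\mu*f)(x)|\ \le\ \|f\|_\infty \cdot |\mu|(G \setminus K_\eps)\ <\ \eps \qquad \text{for all } x \notin K_\eps + S.
\end{displaymath}
Since $K_\eps + S$ is compact, this exhibits $\mu * f$ as vanishing at infinity, giving $\mu \in \mathcal{M}_0^\infty(G)$.

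I do not anticipate a real obstacle here; the only subtlety is making sure to invoke the variation measure $|\mu|$ rather than $\mu$ itself (so that finiteness of $\mu$ as a complex/signed measure is used in the form $|\mu|(G) < \infty$), and to recognize that the compact-support hypothesis on $f$ is precisely what localizes the convolution integral to the translate $x - S$. Everything else is a straightforward application of finiteness plus compact support.
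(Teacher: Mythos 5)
Your proof is correct and follows essentially the same route as the paper's: both use regularity of the finite measure to find a compact set carrying all but $\eps/(\|f\|_\infty+1)$ of the mass, and then observe that outside the sum of that set with $\operatorname{supp}(f)$ the convolution integral only sees the small remainder. Your phrasing via the single bound $|(\mu*f)(x)|\le \|f\|_\infty\,|\mu|(x-S)$ and your explicit use of the variation measure $|\mu|$ are minor (and welcome) tidyings of the same argument.
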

\begin{proof}
Let $f\in C_c(G)$ and $\eps>0$. Since $\mu$ is regular, there is a compact
set $\widetilde{K_{\eps}}\subseteq G$ such that
\begin{displaymath}
\mu( G\setminus \widetilde{K_{\eps}}) < \frac{\eps}{1+\|f\|_{\infty}}.
\end{displaymath}
Set $K_f:=\operatorname{supp}(f)$. Then,
\begin{displaymath}
\begin{split}
|\mu*f(t)|
  &= \left| \int_G f(t-s)\, d\mu(s) \right|                                                                                   \\
  &= \left| \int_{ \widetilde{K_{\eps}}} f(t-s)\, d\mu(s)
     + \int_{G\setminus \widetilde{K_{\eps}}}
         f(t-s)\, d\mu(s)  \right|                                                                                                                        \\
  &<  \left| \int_{\widetilde{K_{\eps}}} f(t-s)\,
     d\mu(s) \right| + \|f\|_{\infty} \cdot \frac{\eps}{1+\|f\|_{\infty}}                                                    \\
  &= 0+\eps = \eps
\end{split}
\end{displaymath}
for all $t\notin K_{\eps}:= \widetilde{K_{\eps}}+K_f$.
\end{proof}

An immediate consequence of this is the following result.

\begin{coro} \label{cor:comp}
  \begin{enumerate}
    \item[(i)] Every compactly supported measure is vanishing at infinity.
    \item[(ii)] Let $G$ be compact. Then,  $\cM(G)=\cM_0^{\infty}(G)$.
  \end{enumerate}
\end{coro}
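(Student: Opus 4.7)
The plan is to derive both statements directly from Proposition~\ref{prop:b}, since both reduce to showing finiteness of the total variation.

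For part (i), I would start with a measure $\mu \in \cM(G)$ whose support is contained in some compact set $K$. Since $\mu$ is a regular Radon measure, $|\mu|(K) < \infty$. Because $\operatorname{supp}(\mu) \subseteq K$, the total variation measure satisfies $|\mu|(G) = |\mu|(K) < \infty$, so $\mu$ is finite. Proposition~\ref{prop:b} then gives $\mu \in \cM_0^{\infty}(G)$ immediately.

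For part (ii), if $G$ itself is compact then every Radon measure $\mu \in \cM(G)$ satisfies $|\mu|(G) < \infty$, i.e.\ is automatically finite. Again Proposition~\ref{prop:b} yields $\mu \in \cM_0^{\infty}(G)$, which gives the inclusion $\cM(G) \subseteq \cM_0^{\infty}(G)$; the reverse inclusion is by definition.

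There is no real obstacle: the only subtlety worth flagging is the appeal to the Radon property of measures in $\cM(G)$ so that compactly supported measures are automatically finite (and so that all of $\cM(G)$ consists of finite measures when $G$ is compact). Both statements are then one-line consequences of the preceding proposition.
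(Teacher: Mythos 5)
Your proposal is correct and matches the paper's intent exactly: the paper states the corollary as an ``immediate consequence'' of Proposition~\ref{prop:b}, and the argument it has in mind is precisely that compactly supported measures (and all Radon measures on a compact group) are finite. Your filling-in of the regularity/finiteness details is fine and adds nothing that would conflict with the paper.
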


The second part of Corollary \ref{cor:comp} gives rise to the following result.

\begin{prop}
Let $G$ be compact. Then, the space $(\cM_0^{\infty}(G),\|.\|_G)$
is a Banach space, where $\|\mu\|_G := |\mu|(G)$.
\end{prop}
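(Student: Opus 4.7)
The plan is to leverage Corollary~\ref{cor:comp}(ii) to reduce this to a classical duality fact. Since $G$ is compact, that corollary identifies $\cM_0^{\infty}(G)$ with the full space of regular Radon measures $\cM(G)$, and every such measure has finite total variation $|\mu|(G) < \infty$, so $\|\mu\|_G$ is a well-defined finite quantity.

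First I would verify that $\|\cdot\|_G$ is in fact a norm on $\cM(G)$. Positivity, definiteness, and homogeneity are immediate from the corresponding properties of the variation measure; the triangle inequality follows from $|\mu+\nu| \leq |\mu| + |\nu|$ (evaluated on $G$).

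For completeness, the quickest route is via duality: because $G$ is compact, $C_c(G) = C(G) = C_U(G)$, and $(C(G), \|\cdot\|_\infty)$ is itself a Banach space. Combining the Riesz--Markov identification of $\cM(G)$ with the dual of $C_c(G)$ (recalled in the Preliminaries) with the fact that, for a regular Radon measure on a compact space, the operator norm of $\mu$ as a functional on $C(G)$ coincides with $|\mu|(G)$, we obtain an isometric identification
\begin{displaymath}
(\cM(G), \|\cdot\|_G) \;\cong\; C(G)^{*} \,.
\end{displaymath}
Since the dual of any normed space is complete, completeness of $(\cM_0^{\infty}(G),\|\cdot\|_G)$ follows.

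If one preferred a direct argument (avoiding the identification with $|\mu|(G)$ as an operator norm), the natural alternative would be: given a Cauchy sequence $(\mu_n)$ in $\|\cdot\|_G$, observe that for every Borel set $A$ we have $|\mu_n(A) - \mu_m(A)| \leq \|\mu_n - \mu_m\|_G$, so $\mu_n(A)$ converges in $\C$; define $\mu(A)$ as the limit, use uniform $\|\cdot\|_G$-Cauchyness to get countable additivity of $\mu$ (via a standard $\varepsilon/3$ argument reducing countable additivity of the limit to that of a single $\mu_N$), and finally check $\|\mu_n - \mu\|_G \to 0$. The main obstacle in the direct approach would be verifying countable additivity of $\mu$; this is why I would prefer the duality route, which sidesteps it entirely.
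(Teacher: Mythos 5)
Your argument follows the paper's proof exactly: both reduce the statement to Corollary~\ref{cor:comp}(ii) and then invoke the completeness of $(\cM(G),\|\cdot\|_G)$, which the paper simply cites as a known fact while you additionally justify it via the isometric identification with $C(G)^{*}$. Your proposal is correct, and the extra detail you supply for the classical fact is sound.
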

\begin{proof}
It is a known fact that $(\cM(G),\|.\|_G)$ is a Banach space if $G$ is compact. Now,  the claim follows from Corollary \ref{cor:comp} (ii).
\end{proof}

The next question is wether we can obtain similar results for arbitrary locally compact Abelian groups $G$. Let us start by reviewing the product topology for measures \cite[Sec.~2]{ARMA}.

Given a translation bounded measure $\mu$ and a function $f \in C_c(G)$, we have $f* \mu \in C_U(G)$. This allows us to embed $\cM^\infty(G) \hookrightarrow [C_U(G)]^{C_c(G)}$ via
\[
\mu \to \left( \mu*g \right)_{g \in C_c(G)} \,.
\]
Since $(C_U(G), \| .\|_\infty)$ is a Banach space, we can equip $[C_U(G)]^{C_c(G)}$ with the product topology. The topology induced on $\cM^{\infty}(G)$ via the embedding $\cM^\infty(G) \hookrightarrow [C_U(G)]^{C_c(G)}$ is called the \emph{product topology} for measures.

As observed in \cite{ARMA}, the product topology is a locally convex topology given by the family of semi-norms $\{ p_g \}_{g \in C_C(G)}$, where
\[
p_g(\mu):= \| \mu*g \|_\infty \,.
\]

Let us emphasise here that while $[C_U(G)]^{C_c(G)}$ is a complete locally convex topological vector space, the authors of \cite{ARMA} do not know wether $\cM^\infty(G)$ is complete in $[C_U(G)]^{C_c(G)}$. They could only prove that $\cM^\infty(G)$ is bounded closed \cite[Thm.~2.4]{ARMA} and quasi-complete \cite[Cor.~2.1]{ARMA}.

We show that $\cM^\infty_0(G)$ is complete in $\cM^\infty(G)$ and therefore, also bounded closed in $[C_U(G)]^{C_c(G)}$ and quasi-complete.

\begin{theorem} $\cM^\infty_0(G)$ is complete in $\cM^\infty(G)$ with respect to the product topology.
\end{theorem}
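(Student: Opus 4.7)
The plan is to take a Cauchy net $(\mu_{\alpha})_{\alpha \in I}$ in $\cM^{\infty}_0(G)$ with respect to the product topology and produce a limit $\mu \in \cM^{\infty}_0(G)$. The argument naturally splits into two steps: first, use the quasi-completeness of $\cM^{\infty}(G)$ to obtain some limit measure $\mu \in \cM^{\infty}(G)$; second, use that $C_0(G)$ is closed under uniform convergence to upgrade this to $\mu \in \cM^{\infty}_0(G)$.

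For the first step, I would note that for each continuous seminorm $p_g$, the Cauchy condition $p_g(\mu_\alpha - \mu_\beta) \to 0$ immediately gives that the net is eventually bounded in $p_g$, so it lies in a bounded subset of $\cM^{\infty}(G)$. Invoking the quasi-completeness of $\cM^{\infty}(G)$ from \cite[Cor.~2.1]{ARMA} then yields a measure $\mu \in \cM^{\infty}(G)$ such that $\mu_{\alpha} \to \mu$ in the product topology; equivalently, $\mu_{\alpha} * g \to \mu * g$ uniformly for every $g \in C_c(G)$.

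For the second step, fix any $f \in C_c(G)$. By the definition of $\cM^{\infty}_0(G)$, each $\mu_{\alpha} * f$ lies in $C_0(G)$. The product topology convergence gives $\mu_{\alpha} * f \to \mu * f$ in $(C_U(G), \|\cdot\|_{\infty})$, and since $C_0(G)$ is a closed subspace of $C_U(G)$ with respect to the uniform norm, the limit $\mu * f$ also lies in $C_0(G)$. As $f \in C_c(G)$ was arbitrary, this shows $\mu \in \cM^{\infty}_0(G)$, finishing the proof.

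The main point requiring care is the invocation of quasi-completeness, namely the verification that the Cauchy net is bounded in the product topology — but this is essentially immediate from the definition of Cauchy, since each seminorm $p_g$ is automatically bounded on a tail of the net. Once this is acknowledged, the remainder of the argument is a routine consequence of the closedness of $C_0(G)$ inside the Banach space $C_U(G)$, so I do not anticipate any serious obstacle.
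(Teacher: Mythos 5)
Your second step is exactly the paper's proof: once $\mu_\alpha * f \to \mu * f$ uniformly for a limit measure $\mu$, the closedness of $C_0(G)$ in $(C_U(G),\|\cdot\|_\infty)$ gives $\mu * f \in C_0(G)$. The gap is in your first step, namely the assertion that the Cauchy condition ``immediately gives that the net lies in a bounded subset of $\cM^\infty(G)$.'' For each fixed $g \in C_c(G)$ the Cauchy condition does give an index $\alpha_0(g)$ beyond which $p_g$ is controlled, but this tail depends on $g$; boundedness in the product topology means being bounded for \emph{all} of the (uncountably many) seminorms $p_g$ simultaneously, and there need be no common tail, nor need the whole net be bounded. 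Unlike Cauchy sequences, Cauchy nets in a locally convex space are in general not bounded and do not admit bounded subnets --- if they did, quasi-completeness would imply completeness, which is false in general and is precisely the distinction the paper is careful about (recall that \cite{ARMA} only establishes quasi-completeness of $\cM^\infty(G)$, not completeness). So your appeal to \cite[Cor.~2.1]{ARMA} is not justified for an arbitrary Cauchy net, and the limit measure $\mu$ you need for step two may fail to exist.

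The paper sidesteps this by proving only the relative statement that the theorem actually asserts: it takes a Cauchy net in $\cM^\infty_0(G)$ \emph{which is assumed to converge to some $\mu \in \cM^\infty(G)$} and shows $\mu \in \cM^\infty_0(G)$; this is what ``complete in $\cM^\infty(G)$'' means there. The remark following the theorem makes your difficulty explicit: for an arbitrary Cauchy net one only obtains a linear functional $L$ on $C_c(G)$ with $L*g \in C_0(G)$, and \emph{boundedness of the net} is needed as an additional hypothesis to conclude that $L$ is continuous, hence a measure. If you restrict your first step to bounded Cauchy nets, your argument is correct and yields quasi-completeness of $\cM^\infty_0(G)$; as stated, the unrestricted claim of step one is false.
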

\begin{proof}
Let $(\mu_\alpha)_{\alpha}$ be a Cauchy net in $\cM^\infty_0(G)$, which converges to some $\mu \in \cM^\infty(G)$. Then, for each $g \in C_c(G)$ we have
\begin{equation} \label{eq:prod lim}
\mu_\alpha *g \to \mu*g \quad \mbox{ in } \, (C_U(G), \| . \|_\infty) \,.
\end{equation}
Since $\mu_\alpha *g \in C_0(G)$ and $(C_0(G), \| . \|_\infty)$ is complete in $(C_U(G), \| . \|_\infty)$, it follows that $\mu*g \in C_0(G)$. Since $g \in C_c(G)$ is arbitrary, this proves the claim.
\end{proof}

\begin{remark} If $(\mu_\alpha)_{\alpha}$ is an arbitrary Cauchy net in $\cM^\infty_0(G)$, then it is straight-forward to prove that there exists a linear function $L: C_c(G) \to \C$ such that, for all $g \in C_c(G)$, we have
\[
\mu_\alpha *g \to L*g \quad \mbox{ in } \, (C_U(G), \| . \|_\infty) \,.
\]
Moreover, as $C_0(G)$ is closed in $C_U(G)$, it follows that $L*g \in C_0(G)$ for all $g \in C_c(G)$.

The boundedness of the net $(\mu_\alpha)_{\alpha}$ is necessary in order to prove that $L$ is continuous and hence a measure. Indeed, for this we need to show that, for each compact set $K \subseteq G$, there exists some constant $C_K$ such that, for all $f \in C_c(G)$ with $\operatorname{supp}(f) \subseteq K$, we have
\[
\left| L(f) \right| \leq C_K \| f \|_\infty \,.
\]
Note that, since $\mu_\alpha$ is a measure, there exists a constant $C_K(\alpha)$ such that
\[
\left| \mu(f) \right| \leq C_K(\alpha) \| f \|_\infty \,.
\]
The boundedness of the net $(\mu_\alpha)_{\alpha}$ is important for this proof as it means that, for each $K$, the set $\{ C_K(\alpha) \}$ is bounded and hence we can pick the supremum of this set as $C_K$, which yields the continuity of $L$.
\exend
\end{remark}

\medskip

Let us note now that, for a function $f \in C_U(G)$, vanishing at infinity could mean vanishing at infinity as a measure or as a function. We will see that the two concepts are equivalent.

\begin{prop} \label{prop:a}
\begin{displaymath}
\mc M_0^{\infty}(G) \cap C_U(G) = C_0(G) \,.
\end{displaymath}
\end{prop}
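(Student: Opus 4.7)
The plan is to prove the equality by establishing both inclusions, under the implicit identification of a function $f \in C_U(G)$ with the measure $\mu_f := f \cdot \theta_G$, for which the measure-convolution $\mu_f * g$ coincides with the ordinary function-convolution $f * g$ whenever $g \in C_c(G)$.

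For the inclusion $C_0(G) \subseteq \mc M_0^{\infty}(G) \cap C_U(G)$, I would take $f \in C_0(G)$ and any $g \in C_c(G)$ with $K := \operatorname{supp}(g)$. Given $\eps > 0$, choose a compact set $K_\eps$ with $|f(y)| < \eps / (1 + \|g\|_\infty \theta_G(K))$ for every $y \notin K_\eps$. Whenever $x$ lies outside the compact set $K_\eps + K$, every $t \in K$ satisfies $x - t \notin K_\eps$, so
\[
|(\mu_f * g)(x)| \ \leq \ \int_K |f(x-t)|\,|g(t)|\, dt \ < \ \eps,
\]
which shows $\mu_f * g \in C_0(G)$ and hence $\mu_f \in \mc M_0^{\infty}(G)$.

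For the reverse inclusion, I would mimic the argument in Lemma \ref{lem:a}. Let $f \in C_U(G)$ with $\mu_f \in \mc M_0^{\infty}(G)$, and pick an approximate identity $(h_\alpha)_{\alpha} \subseteq C_c(G)$ for $(C_U(G), *)$. By hypothesis, $\mu_f * h_\alpha = f * h_\alpha \in C_0(G)$ for each $\alpha$; at the same time, $f * h_\alpha \to f$ uniformly (this is the defining property of the approximate identity that was already invoked in Lemma \ref{lem:a}). Since $C_0(G)$ is a closed subspace of $(C_U(G), \|.\|_\infty)$, the uniform limit $f$ belongs to $C_0(G)$.

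No substantial obstacle is anticipated. The only subtlety worth stating carefully is the identification $f \leftrightarrow f \cdot \theta_G$ and the observation that the two notions of convolution agree on $C_c(G)$; once this is in place, both directions reduce to elementary facts already used earlier in the paper.
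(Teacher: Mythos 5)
Your proposal is correct and follows essentially the same route as the paper: the paper likewise asserts the inclusion $C_0(G)\subseteq \mc M_0^{\infty}(G)$ via $f*g\in C_0(G)$ for $g\in C_c(G)$ (which you merely spell out in more detail), and proves the reverse inclusion by exactly your approximate-identity argument together with the closedness of $C_0(G)$ in $(C_U(G),\|.\|_\infty)$.
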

\begin{proof}
(i) Obviously, $C_0(G)\subseteq C_U(G)$. Furthermore, $C_0(G)
\subseteq \mc M_0^{\infty}(G)$ because $f\in C_0(G)$ and $g\in
C_c(G)$ imply $f*g\in C_0(G)$.

(ii) Let $f\in\mc M_0^{\infty}(G) \cap C_U(G)$. As before, let $(g_{\alpha})_{\alpha}$ be an approximate identity consisting
of functions from $C_c(G)$. Since $f\in\mc M_0^{\infty}(G)$, we have
\begin{equation} \label{eq:d}
f*g_{\alpha} \in C_0(G)\,.
\end{equation}
Now, as $f\in C_U(G)$ we have
\begin{equation} \label{eq:e}
\lim_{\alpha}\ f*g_{\alpha}\ = \ f \quad \text{ uniformly}.
\end{equation}
Hence, by~(\ref{eq:d}) and~(\ref{eq:e}), we obtain $f\in C_0(G)$.
\end{proof}

\medskip
Let us emphasise that the uniform continuity of $f$ is crucial in the proof of Proposition \ref{prop:a}. For this purpose, we provide an example of a bounded and continuous function which is \emph{not} uniformly continuous, which is vanishing at infinity as a measure but \emph{not} as a function.

\begin{example}
Consider the sequence of functions $(f_n)_{n\in\N}$ defined by
\begin{displaymath}
f_n:\R\to \R,\quad x \mapsto
\begin{cases}
1-2^n\cdot |x-n|, & |x-n|<2^{-n}, \\
0, & \text{otherwise},
\end{cases}
\end{displaymath}
and the function $f(x):=\sum_{n=1}^{\infty} f_n(x)$. A short calculation
shows $\int_{\R} f_n(x)\, dx = \left(\frac{1}{2}\right)^n$, hence
\begin{displaymath}
\int_{\R} f(x)\, dx = \sum_{n=1}^{\infty} \int_{\R} f_n(x)\, dx =1.
\end{displaymath}
Since $f\in L^1(\R)$, the measure $\mu:=f\, \lambda$ is finite and, by Proposition
~\ref{prop:b}, $\mu\in \mc M_0^{\infty}(\R)$. Due to Proposition~\ref{prop:a},
$f$ cannot be uniformly continuous because $f\notin C_0(\R)$.
\end{example}

\medskip

We complete this section by giving a simple criterion for measures with uniformly discrete support to be vanishing at infinity.

\begin{prop}\label{prop:c}
Let $\mu\in \mc M^{\infty}(G)$ and $\Lambda:=\operatorname{supp}
(\mu)$ be uniformly discrete. Then, $\mu\in \mc M_0^{\infty}(G)$
if and only if, for all $\eps>0$, there is a compact set $K_{\eps}\subseteq G$ such that
\begin{displaymath}
|\mu(\{x\})| < \eps \quad \text{ for all } x\in \Lambda\setminus K_{\eps}.
\end{displaymath}
\end{prop}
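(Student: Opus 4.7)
The plan is to use the structural fact that, because $\Lambda$ is uniformly discrete (hence closed and locally finite) and $\mu \in \cM^\infty(G)$ is supported on $\Lambda$, one has $\mu = \sum_{x \in \Lambda} c_x \delta_x$ with $c_x := \mu(\{x\})$, and to handle the two implications by a clever choice of test function $f \in C_c(G)$ in the convolution $\mu*f$.

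For the sufficiency direction, assuming the tail condition on $|c_x|$, I would fix any $f \in C_c(G)$ with compact support $L := \operatorname{supp}(f)$ and write
\[
(\mu*f)(t) = \sum_{y \in \Lambda \cap (t-L)} c_y\, f(t-y) \,.
\]
Uniform discreteness of $\Lambda$ together with compactness of $L$ yields a uniform bound $N := \sup_{t\in G} \#(\Lambda \cap (t-L)) < \infty$. Hence $|(\mu*f)(t)| \leq N \|f\|_\infty \cdot \max_{y \in \Lambda \cap (t-L)} |c_y|$. Given $\eps > 0$, if $t \notin K_\eps + L$ then for every $s \in L$ we have $t - s \notin K_\eps$, so $\Lambda \cap (t-L) \subseteq \Lambda \setminus K_\eps$ and hence $|(\mu*f)(t)| < N\|f\|_\infty \eps$. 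As $K_\eps + L$ is compact, this gives $\mu*f \in C_0(G)$.

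For the necessity direction, the key is to pick a test function that recovers the individual coefficients $c_x$. By uniform discreteness, there is a symmetric open neighbourhood $V$ of $0$ with $V \cap (\Lambda - \Lambda) = \{0\}$; shrinking if necessary, I can take $V$ precompact so that some $f \in C_c(G)$ satisfies $\operatorname{supp}(f) \subseteq V$ and $f(0) = 1$. For any $x \in \Lambda$, the equation
\[
(\mu*f)(x) = \sum_{y \in \Lambda} c_y f(x-y)
\]
collapses to $c_x f(0) = c_x$, since $f(x-y) \neq 0$ forces $x - y \in V$, and then $y \in \Lambda$ together with $V \cap (\Lambda-\Lambda) = \{0\}$ forces $y = x$. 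By hypothesis $\mu*f \in C_0(G)$, so for each $\eps > 0$ there is a compact $K_\eps$ with $|(\mu*f)(t)| < \eps$ for all $t \notin K_\eps$; restricting this to $t = x \in \Lambda \setminus K_\eps$ yields $|c_x| < \eps$.

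The only genuine obstacle is verifying that the test function $f$ in the second direction can be chosen so that $(\mu*f)(x) = c_x$ exactly on $\Lambda$. This is a standard consequence of uniform discreteness (turning the convolution into a one-term sum at lattice points), but it relies on being able to choose $V$ symmetric with $V \cap (\Lambda - \Lambda) = \{0\}$ inside an arbitrary LCA group; this is immediate from the definition of uniform discreteness, so no real technical difficulty arises.
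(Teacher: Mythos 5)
Your proposal is correct and follows essentially the same route as the paper: the sufficiency direction bounds $|\mu*f(t)|$ by the number of points of $\Lambda$ in a translate of $\operatorname{supp}(f)$ times the maximal coefficient outside $K_\eps$, and the necessity direction uses a bump function supported in a separating neighbourhood so that $(\mu*f)(x)=\mu(\{x\})$ on $\Lambda$. The only differences are cosmetic (you rescale $\eps$ at the end rather than at the start).
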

\begin{proof}
(i) necessary part. Due to the uniform discreteness, there is an open
set $U$ such that
\begin{displaymath}
(U+x)\cap (U+y) = \varnothing \quad \text{ for all distinct } x,y\in \Lambda.
\end{displaymath}
Let $\eps>0$. $\mu\in\mc M_0^{\infty}(G)$ implies that there is a compact
set $K_{\eps}\subseteq G$ such that
\begin{displaymath}
|\mu*f(t)|<\eps \quad \text{ for all } t\notin K_{\eps}.
\end{displaymath}
This is true for all $f\in C_c(G)$. Now, let $f\in C_c(G)$ with $f(0)=1$ and
$\operatorname{supp}(f) \subseteq U$. We obtain
\begin{displaymath}
\begin{split}
|\mu(\{s\})|
  &=\bigg| \sum_{x\in \Lambda\cap (-U+s)} \mu(\{x\})\bigg| = \bigg| \sum_{x\in
     \Lambda\cap (-U+s)} \mu(\{x\})f(s-x) \bigg|                                                                         \\
  &= \bigg| \sum_{x\in\Lambda} \mu(\{x\})f(s-x) \bigg| = |\mu*f(s)| <\eps
\end{split}
\end{displaymath}
for all $s\in \Lambda\setminus K_{\eps}$.

(ii) sufficient part. Let $f\in C_c(G)$ with $K_f:=\operatorname{supp}(f)$.
Let $\eps>0$. By assumption, there is a compact set $K_{\eps}$ such
that
\begin{displaymath}
|\mu(\{x\})|<\eps \cdot \left( \|f\|_{\infty}\cdot \sup_{t\in G} |\Lambda\cap
(-K_f+t)|\right)^{-1}.
\end{displaymath}
For $t\notin \widetilde{K_{\eps}}:=K_f+K_{\eps}$, we obtain
\begin{displaymath}
\begin{split}
|\mu*f(t)|
  &= \big| \sum_{s\in\Lambda} \mu(\{s\}) f(t-s) \big|                                                               \\
  &\leq \sum_{s\in \Lambda\cap (-K_f+t)\cap K_{\eps}} |\mu(\{s\})| |f(t-s)|
     +\sum_{s\in \Lambda\cap (-K_f+t)\cap (\Lambda\setminus K_{\eps})}
      |\mu(\{s\})| |f(t-s)|                                                                                                                \\
  &\leq 0 + \eps \cdot \left( \|f\|_{\infty}\cdot \sup_{t\in G} |\Lambda\cap
       (-K_f+t)|\right)^{-1} \sum_{s\in(-K_f+t)\cap \Lambda} |f(t-s)|     < \eps.
\end{split}
\end{displaymath}
\end{proof}

Proposition \ref{prop:c} suggests the following definition:

\begin{definition} A pure point measure $\mu=\sum_{x \in \Lambda} \mu(\{ x \}) \delta_x$ is said to have \emph{coefficients vanishing at infinity}
if, for all $\eps>0$, there is a compact set $K_{\eps}$ such that
\begin{displaymath}
|\mu(\{x\})| < \eps \quad \text{ for all } x\in \Lambda\setminus K_{\eps}.
\end{displaymath}
\end{definition}

\medskip

\begin{remark} By Proposition \ref{prop:c}, for measures with uniformly discrete support, vanishing at infinity is equivalent to coefficients vanishing at infinity.

For pure point measures without uniformly discrete support, there is no relation in general between these two concepts. Indeed, Example \ref{ex:a} gives a measure $\mu$ which is vanishing at infinity, but for which the coefficients are not vanishing at infinity. On the other hand, we will provide a measure for which the coefficients are vanishing at infinity, but the measure is not, in the next example.
\exend
\end{remark}

\begin{example} Let
\[
\nu:= \sum_{n \in \mathbb N} \sum_{k=0}^{n-1} \frac{1}{n} \delta_{n+\frac{k}{n}} \,.
\]
It is obvious that this measure has coefficients vanishing at infinity.

However, if $f \in C_c(\R)$ is supported inside $(0,1)$, it is easy to see that, by the convergence of Riemann sums, we have
\[
\lim_{x \to \infty} (\nu*f)(x) = \int_0^1 f(t) dt \,.
\]
Therefore, if we pick an $f$ with $\int_0^1 f(t) dt \neq 0$, we get that $\nu*f \notin C_0(\R)$. This shows that $\nu$ is not vanishing at infinity.
\end{example}

\section{The Riemann--Lebesgue lemma for measures}\label{sect:On the}

In this section, we prove a version of the Riemann--Lebesgue lemma for a Fourier transformable measure $\mu$. We then review a similar result of \cite{BF} which complements our result.

The version we prove is important for diffraction theory. While this result does not seem to have been proved explicitly before, the idea of the proof combined with  Proposition \ref{prop:c} has been exploited in some places to prove the existence of singular continuous spectrum \cite{TAO,BG,BGa,GB}.

\begin{theorem}{\rm [Riemann--Lebesgue lemma for measures]}  \label{thm:a}
Let $\mu\in \mc M(G)$ be a Fourier transformable measure. If $\widehat{\mu}$
is absolutely continuous, then $\mu\in\mc M_0^{\infty}(G)$.
\end{theorem}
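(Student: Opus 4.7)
The strategy is to reduce everything to the classical Riemann--Lebesgue lemma for $L^1$-functions on $\widehat{G}$. Since every Fourier transformable measure is automatically translation bounded, we have $\mu \in \cM^\infty(G)$, and Lemma~\ref{lem:a} allows us to replace the test space $C_c(G)$ by $K_2(G)$: it suffices to prove $\mu * f \in C_0(G)$ for every $f \in K_2(G)$.

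Fix such an $f$ and let $x \in G$. Set $g_x(t) := f(x-t)$. The space $K_2(G)$ is invariant under translation and reflection (these operations merely translate or reflect the individual $C_c$-factors in any decomposition $f = g_1 * g_2$), so $g_x \in K_2(G)$ as well. The equivalent formulation of Fourier transformability recalled after the definition then yields
$$
(\mu*f)(x) \;=\; \int_G g_x\, d\mu \;=\; \int_{\widehat{G}} \widecheck{g_x}\, d\widehat{\mu}.
$$
A direct computation with the translation and reflection rules for the Fourier transform gives an identity of the form $\widecheck{g_x}(\chi) = \chi(x)\, \varphi(\chi)$, where $\varphi$ is obtained from $\widecheck{f}$ by a reflection on $\widehat{G}$ (and possibly a conjugation, depending on conventions). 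In particular, $\varphi \in L^1(\widehat{\mu})$, since $\widecheck{g_x} \in L^1(\widehat{\mu})$ and $|\chi(x)| = 1$.

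Now invoke absolute continuity: write $\widehat{\mu} = h\,\theta_{\widehat{G}}$ with $h \in L^1_{\mathrm{loc}}(\widehat{G})$. The condition $\varphi \in L^1(\widehat{\mu})$ is exactly $\varphi \cdot h \in L^1(\widehat{G},\theta_{\widehat{G}})$, and substituting into the previous display gives
$$
(\mu*f)(x) \;=\; \int_{\widehat{G}} \chi(x)\, \varphi(\chi)\, h(\chi)\, d\theta_{\widehat{G}}(\chi),
$$
which exhibits $\mu * f$ as the inverse Fourier transform of an $L^1$-function on $\widehat{G}$. The classical Riemann--Lebesgue lemma for functions then delivers $\mu * f \in C_0(G)$, as required.

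The only delicate step is the explicit computation of $\widecheck{g_x}$ in the paper's sign and conjugation conventions; once this bookkeeping is carried out, the rest is an immediate reduction to the $L^1$-Fourier case.
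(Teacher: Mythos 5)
Your proposal is correct and follows essentially the same route as the paper: both reduce to test functions in $K_2(G)$ via Lemma~\ref{lem:a}, use the Radon--Nikodym density of $\widehat{\mu}$ to exhibit $\mu*f$ as the (inverse) Fourier transform of an $L^1$-function on $\widehat{G}$, and conclude with the classical Riemann--Lebesgue lemma. The only difference is cosmetic: you rederive the needed convolution identity pointwise from the definition of transformability, whereas the paper cites $\widehat{\mu*g}=\widehat{\mu}\,\widehat{g}$ from the literature.
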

\begin{proof}
Let $g\in K_2(G)$. By \cite[Thm. 3.1]{ARMA1}, $\mu*g$ is Fourier transformable
and
\begin{displaymath}
\widehat{\mu*g} = \widehat{\mu}\, \widehat g.
\end{displaymath}
The measure $\widehat{\mu}$ is absolutely continuous, i.e.
\begin{displaymath}
\widehat{\mu} = f\, d\theta\quad \text{ with } f\in L_{\text{loc}}^1(G)
\end{displaymath}
due to the Radon--Nikodym theorem. Furthermore, by the definition of the transformability of $\mu$, we have
\begin{displaymath}
\int_G\check g f\, d\theta = \int_G \check g\, d\widehat{\mu}<\infty \,.
\end{displaymath}
Hence, $\widehat{\mu}\widehat g$ is a finite measure and therefore $f\widehat g \in L^1(G)$. In particular,  $f\widehat g$ is Fourier transformable as $L^1$-function. This implies \cite{ARMA1,MoSt}
\begin{displaymath}
(\mu*g)^{\dagger} = \widehat{f\widehat g}\,,
\end{displaymath}
where for a function $h$ we denote by $h^\dagger$ the reflection
\[
h^\dagger(x):=h(-x) \,.
\]

Now, by the Riemann--Lebesgue lemma for functions, we have $(\mu*g)^{\dagger} \in C_0(G)$. The claim follows by
an application of Lemma~\ref{lem:a}.
\end{proof}

\medskip

\begin{remark}\label{rem:a}
If $\mu\in\mc M_0^{\infty}(G)$, its Fourier transform is not necessarily absolutely
continuous. Consider the measure
\begin{displaymath}
\mu:=2\pi J_0(2\pi\|.\|)\, \lambda^2,
\end{displaymath}
where $J_0$ is the Bessel function of the first kind of order zero. By
\cite[p. 154]{SW}, we obtain $\widehat{\lambda^2|_{S^1}}=\mu$ and,
consequently,
\begin{displaymath}
\widehat{\mu} = \widetilde{\lambda^2|_{S^1}}.
\end{displaymath}
As $2\pi J_0(2\pi\|.\|)\in C_0(\R^2)$, Proposition~\ref{prop:a} implies
$2\pi J_0(2\pi\|.\|)\, \lambda^2\in\mc M_0^{\infty}(\R^2)$. But
$\widehat{\mu}$ is singular continuous.
\exend
\end{remark}

\medskip

Let us start by looking at some immediate consequences of Theorem \ref{thm:a}.

First, by combining Theorem \ref{thm:a} with Proposition \ref{prop:c} we get

\begin{coro} Let
\[
\mu:= \sum_{x \in \Lambda} \mu(\{ x \}) \delta_x \,,
\]
be a Fourier transformable measure. If $\Lambda$ is uniformly discrete and $\widehat{\mu}$ is absolutely continuous, then the coefficients of $\mu$ are vanishing at infinity.  \qed
\end{coro}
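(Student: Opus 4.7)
The statement is essentially the concatenation of the two results immediately preceding it, so the plan is to just chain them together with a minimal verification that the hypotheses match.

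First I would invoke Theorem \ref{thm:a} (the Riemann--Lebesgue lemma for measures). Since $\mu$ is Fourier transformable and $\widehat{\mu}$ is absolutely continuous, the theorem gives $\mu \in \mc M_0^{\infty}(G)$ directly. Note that we already know from an earlier lemma in Section~\ref{sec:measures van at infinity} that $\mc M_0^{\infty}(G) \subseteq \mc M^{\infty}(G)$, so $\mu$ is automatically translation bounded, which is the running assumption of Proposition \ref{prop:c}.

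Next I would apply Proposition \ref{prop:c}. Its hypotheses are (a) $\mu \in \mc M^{\infty}(G)$, which is now in hand, and (b) $\operatorname{supp}(\mu)$ is uniformly discrete, which is the assumption on $\Lambda$. The conclusion of that proposition's ``necessary part'' is exactly that for every $\eps > 0$ there exists a compact set $K_\eps \subseteq G$ with $|\mu(\{x\})| < \eps$ for all $x \in \Lambda \setminus K_\eps$. This is precisely the definition of the coefficients of $\mu$ vanishing at infinity, so we are done.

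There is no real obstacle here; the only thing to double-check is that $\operatorname{supp}(\mu) = \Lambda$ is indeed the set referenced in Proposition \ref{prop:c}. Since $\mu$ is written as $\sum_{x \in \Lambda} \mu(\{x\}) \delta_x$, we may assume without loss of generality that $\mu(\{x\}) \neq 0$ for every $x \in \Lambda$ (any $x$ with $\mu(\{x\}) = 0$ can be discarded from the indexing set without changing $\mu$), so the support equals $\Lambda$ and the two propositions dovetail exactly.
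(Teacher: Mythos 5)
Your proposal is correct and follows exactly the paper's intended argument: the paper states this corollary as an immediate consequence of combining Theorem \ref{thm:a} with Proposition \ref{prop:c}, which is precisely the chain you carry out. Your extra checks (that measures vanishing at infinity are translation bounded, and that $\operatorname{supp}(\mu)=\Lambda$ after discarding zero coefficients) are sensible bookkeeping that the paper leaves implicit.
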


\medskip

\begin{coro}\label{cor:b} Let
\[
\mu:= \sum_{x \in \Lambda} \mu(\{ x \}) \delta_x \,,
\]
be a Fourier transformable measure with lattice support. If $\widehat{\mu}$ is absolutely continuous, then the coefficients of $\mu$ are vanishing at infinity.    \qed
\end{coro}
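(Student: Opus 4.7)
The plan is to reduce Corollary~\ref{cor:b} to the immediately preceding corollary, which gives exactly the same conclusion under the weaker assumption that $\Lambda$ is uniformly discrete. So the only thing that really needs to be checked is that ``lattice support'' implies uniform discreteness of $\Lambda$.

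The key observation is that a lattice $L$ in an LCA group $G$ is, by definition, a discrete subgroup (with compact quotient, but that is not needed here). Discreteness at the neutral element alone gives uniform discreteness of $L$ as a subset of $G$: pick an open neighbourhood $U$ of $0$ with $U\cap L=\{0\}$ and an open symmetric neighbourhood $V$ of $0$ with $V-V\subseteq U$; then the translates $V+x$, $x\in L$, are pairwise disjoint, so $V$ witnesses the uniform discreteness of $L$. Since $\operatorname{supp}(\mu)\subseteq L$ by hypothesis, $\Lambda=\operatorname{supp}(\mu)$ is uniformly discrete as a subset of a uniformly discrete set.

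With this in hand, the proof is one line: $\mu$ is a Fourier transformable pure point measure whose support $\Lambda$ is uniformly discrete and whose Fourier transform $\widehat{\mu}$ is absolutely continuous, so the previous corollary applies and yields that the coefficients of $\mu$ are vanishing at infinity. There is no real obstacle; Corollary~\ref{cor:b} is essentially a restatement in the form most directly relevant to the diffraction-theoretic applications (Thue--Morse and related substitution models discussed in the introduction), where lattice support is the natural hypothesis and where this version of the Riemann--Lebesgue lemma will be used to exclude absolutely continuous components from $\widehat{\gamma}$.
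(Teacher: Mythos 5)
Your proof is correct and matches the paper's intent exactly: the corollary is stated with a \qed because it is the immediate specialisation of the preceding corollary to the case where the uniformly discrete set is a lattice, and your verification that a discrete subgroup is uniformly discrete (via $V-V\subseteq U$ with $U\cap L=\{0\}$) is the only detail left implicit in the paper.
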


\medskip

\begin{coro} Let
\[
\mu:= \sum_{x \in \Lambda} \mu(\{ x \}) \delta_x \,,
\]
be a Fourier transformable measure with Meyer set support. If $\widehat{\mu}_{\operatorname{sc}}=0$, then the coefficients of $\mu_{0}$ are vanishing at infinity.
\end{coro}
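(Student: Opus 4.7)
The plan is to reduce this corollary to the previous one (with uniformly discrete support) by applying the Eberlein decomposition and invoking a known structural fact about measures with Meyer set support.

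First, since $\mu$ is Fourier transformable, the theorem in the preliminaries gives $\mu \in \WAP(G)$, together with its Eberlein decomposition $\mu = \mu_s + \mu_0$ where $\mu_s, \mu_0$ are themselves Fourier transformable and satisfy $\reallywidehat{\mu_s} = (\widehat{\mu})_{pp}$ and $\reallywidehat{\mu_0} = (\widehat{\mu})_{c}$. The assumption $\widehat{\mu}_{\operatorname{sc}} = 0$ therefore yields
\[
\reallywidehat{\mu_0} \, = \, (\widehat{\mu})_{c} \, = \, (\widehat{\mu})_{\operatorname{ac}} \, ,
\]
so $\reallywidehat{\mu_0}$ is absolutely continuous. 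Applying the Riemann--Lebesgue lemma for measures (Theorem~\ref{thm:a}) to $\mu_0$ then gives $\mu_0 \in \mc M_0^{\infty}(G)$.

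The main obstacle, which is where the Meyer hypothesis is actually used, is ensuring that $\mu_0$ has uniformly discrete support so that Proposition~\ref{prop:c} becomes applicable; a priori the Eberlein decomposition could smear the coefficients of $\mu$ onto a larger set. Here one invokes the structural result (as in \cite{NS1,NS2,NS5,NS11}) that for a Fourier transformable measure supported inside a Meyer set $\Lambda$, both components of the Eberlein decomposition are supported inside $\Lambda - \Lambda$. Since $\Lambda$ is Meyer, $\Lambda - \Lambda$ is itself a Meyer set and in particular uniformly discrete; hence $\operatorname{supp}(\mu_0)$ is uniformly discrete.

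Having established that $\mu_0$ is a pure point measure vanishing at infinity with uniformly discrete support, Proposition~\ref{prop:c} immediately yields that the coefficients of $\mu_0$ are vanishing at infinity, completing the proof. The only non-routine ingredient is the citation-level fact about the support of $\mu_0$; everything else is assembled from results already in the paper.
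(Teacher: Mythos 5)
Your proposal is correct and follows essentially the same route as the paper: the paper likewise invokes the Eberlein decomposition, the fact (cited from the Meyer-set literature) that $\mu_0$ inherits Meyer set support and is hence uniformly discretely supported, the Riemann--Lebesgue lemma for measures to get $\mu_0\in\mc M_0^{\infty}(G)$, and then the uniform-discreteness criterion to pass to the coefficients. Your write-up is in fact slightly more explicit than the paper's about why $\reallywidehat{\mu_0}$ is absolutely continuous and about which criterion is being applied at the last step.
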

\begin{proof}
Since $\mu$ has Meyer set support, so has $\mu_{0}$ \cite{NS5}. Therefore, $\mu_{0}$ is a measure vanishing at infinity with uniformly discrete support. The claim now follows from Corollary \ref{cor:b}.
\end{proof}

\medskip

We look next at a Fourier dual version of Theorem \ref{thm:a}.

\begin{coro}\label{cor:a} Let $\mu$ be a measure which is twice Fourier transformable. If $\mu$ is absolutely continuous, then $\widehat{\mu} \in \cM^\infty_0(\widehat{G})$.   \qed
\end{coro}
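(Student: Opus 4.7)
The plan is to apply Theorem \ref{thm:a} directly to $\widehat{\mu}$, viewed as a Fourier transformable measure on the LCA group $\widehat{G}$. The hypothesis that $\mu$ is twice Fourier transformable means exactly that $\widehat{\mu}$ itself is Fourier transformable on $\widehat{G}$, so the preconditions of Theorem \ref{thm:a} are half met; what remains is to check that the Fourier transform of $\widehat{\mu}$ is absolutely continuous with respect to the Haar measure of $\widehat{\widehat{G}}=G$.

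The key step is the Fourier inversion formula for measures: under the Pontryagin identification $\widehat{\widehat{G}}\cong G$, one has $\reallywidehat{\widehat{\mu}}=\mu^{\dagger}$, where $\mu^\dagger$ is the reflection of $\mu$ (this is the measure version of Fourier inversion, as recorded in \cite{ARMA1, MoSt}). Since $\mu$ is assumed absolutely continuous, say $\mu=f\,\theta_G$ with $f\in L^1_{\operatorname{loc}}(G)$, the reflection satisfies $\mu^\dagger=f^\dagger\,\theta_G$ and is therefore absolutely continuous as well (the Haar measure is invariant under inversion on an LCA group, so absolute continuity is preserved under $\dagger$).

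Thus $\widehat{\mu}$ is a Fourier transformable measure on $\widehat{G}$ whose Fourier transform $\mu^\dagger$ is absolutely continuous. Theorem \ref{thm:a}, applied with $\widehat{G}$ playing the role of $G$, yields $\widehat{\mu}\in\cM_0^\infty(\widehat{G})$, which is the claim.

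The main obstacle, if any, is purely bookkeeping: one must be sure to invoke the correct form of the measure-theoretic Fourier inversion and to recognise that the reflection $\mu\mapsto\mu^\dagger$ preserves absolute continuity. Both facts are standard, so the corollary really is an immediate dualisation of Theorem \ref{thm:a}, which is why the statement is given without proof (marked by \qed) in the excerpt.
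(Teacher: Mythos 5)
Your argument is correct and is precisely the dualisation the paper intends by marking Corollary~\ref{cor:a} with \qed: apply Theorem~\ref{thm:a} on $\widehat{G}$ to the transformable measure $\widehat{\mu}$, using the inversion formula $\reallywidehat{\widehat{\mu}}=\mu^{\dagger}$ and the fact that reflection preserves absolute continuity. Nothing is missing.
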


We complete the section by recalling a result of \cite{BF}. This result shows that in Corollary~\ref{cor:a}, twice Fourier transformability  can be replaced by positive definitedness, and can also be seen as the Fourier dual of a particular case in Theorem \ref{thm:a}.

\begin{theorem}\label{thm:b}\cite[Prop.~4.9]{BF} Let $\mu$ be a positive definite measure. If $\mu$ is absolutely continuous, then $\widehat{\mu} \in \cM^\infty_0(\widehat{G})$.   \qed
\end{theorem}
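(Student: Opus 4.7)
The plan is to show that positive definiteness combined with absolute continuity actually forces $\widehat{\mu}$ to be a \emph{finite} measure on $\widehat{G}$; once this is established, Proposition~\ref{prop:b} finishes the job with no further work.

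First, I would upgrade ``$\mu$ is absolutely continuous with some $L^1_{\mathrm{loc}}$ density $g$'' to ``$g$ has a continuous positive definite representative''. The measure-theoretic positive definiteness $\langle f*\widetilde{f},\mu\rangle\geq 0$ for all $f\in C_c(G)$ translates to
\[
\int_G g(x)\,(f*\widetilde{f})(x)\,dx \geq 0 \quad \text{for every } f\in C_c(G).
\]
Standard harmonic analysis (see e.g.\ the arguments underlying Bochner's theorem in \cite{BF}) then implies that, for every $f\in C_c(G)$, the convolution $g*(f*\widetilde f)=\mu*(f*\widetilde f)$ is a bounded continuous positive definite function. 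Using an approximate identity $(f_\alpha*\widetilde{f}_\alpha)_\alpha$ and the fact that such continuous positive definite functions converge uniformly on compacta to a continuous positive definite limit agreeing with $g$ almost everywhere, we may redefine $g$ on a null set so that it is itself continuous and positive definite on $G$.

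Second, Bochner's theorem on the LCA group $G$ yields a unique finite positive Radon measure $\nu$ on $\widehat{G}$ such that
\[
g(x) \;=\; \int_{\widehat{G}} \chi(x)\,d\nu(\chi) \;=\; \widecheck{\nu}(x) \qquad (x\in G).
\]
Third, I would identify $\widehat\mu$ with $\nu$. For any $h\in K_2(G)$ the function $\widecheck h$ is integrable with respect to any finite measure on $\widehat G$, and Fubini (justified by $\nu(\widehat G)<\infty$ and $h\in L^1(\theta)$) gives
\[
\langle h,\mu\rangle \;=\; \int_G h(x)\,g(x)\,dx \;=\; \int_{\widehat G}\!\int_G h(x)\chi(x)\,dx\,d\nu(\chi) \;=\; \langle \widecheck h,\nu\rangle.
\]
Comparing with the defining identity $\langle h,\mu\rangle=\langle\widecheck h,\widehat\mu\rangle$ of Fourier transformability and invoking uniqueness of the Fourier transform, we conclude $\widehat\mu=\nu$. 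Since $\nu$ is a finite measure, Proposition~\ref{prop:b} immediately yields $\widehat\mu\in\cM_0^\infty(\widehat G)$.

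The main obstacle I expect is the first step: going from the weak/measure-theoretic positive definiteness of $\mu$ to the existence of a genuinely continuous positive definite representative of the density $g$. Once that technical passage through approximate identities is in hand, Bochner's theorem and Proposition~\ref{prop:b} do all the remaining work, and the proof reduces to the conceptual observation that a positive definite absolutely continuous measure is automatically the Fourier transform of a finite positive measure.
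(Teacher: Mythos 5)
Your strategy has a fatal gap at its very first step: the claim that positive definiteness together with absolute continuity forces $\widehat{\mu}$ to be a \emph{finite} measure is false. A positive definite measure is only positive definite ``in the distributional sense'', which is strictly weaker than having a density equal a.e.\ to a continuous positive definite function, so Bochner's theorem does not apply. A concrete counterexample on $G=\R$ is the Riesz kernel $d\mu(t)=|t|^{-1/2}\,dt$: the density is in $L^1_{\text{loc}}(\R)$ but unbounded near $0$, the measure is positive definite because
\[
\int_{\R} |t|^{-1/2}\,\bigl(f*\widetilde{f}\,\bigr)(t)\,dt \;=\; c\int_{\R} |\xi|^{-1/2}\,|\widehat{f}(\xi)|^2\,d\xi \;\geq\; 0
\]
for all $f\in C_c(\R)$, and its Fourier transform is $c\,|\xi|^{-1/2}\,d\xi$ --- an \emph{infinite} measure, which nevertheless lies in $\cM^\infty_0(\R)$ exactly as the theorem asserts. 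The precise point where your argument breaks is the approximate-identity passage: each $\mu*f_\alpha*\widetilde{f_\alpha}$ is indeed a bounded continuous positive definite function, but the net converges to $g$ only in $L^1_{\text{loc}}$; to upgrade this to locally uniform convergence with a continuous limit you would need the values $(\mu*f_\alpha*\widetilde{f_\alpha})(0)$ to remain bounded, and for $g(t)=|t|^{-1/2}$ they diverge. Since $\widehat{\mu}$ need not be finite, Proposition~\ref{prop:b} cannot serve as the final step, and the whole reduction collapses.

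For comparison: the paper itself gives no proof here, quoting the result from \cite[Prop.~4.9]{BF}; as the surrounding text explains, the statement should be seen as the Fourier dual of Theorem~\ref{thm:a}, i.e.\ as Corollary~\ref{cor:a} with twice transformability replaced by positive definiteness. A proof in the spirit of this paper runs as follows: a positive definite measure $\mu$ is Fourier transformable with $\widehat{\mu}$ positive, and $\widehat{\mu}$ is again transformable with $\widehat{\widehat{\mu}}=\mu^\dagger$, which is still absolutely continuous; applying Theorem~\ref{thm:a} to the measure $\widehat{\mu}$ (equivalently, writing $\widehat{\mu}*h$ for $h\in K_2(\widehat{G})$ as the Fourier transform of an $L^1$-function built from the density of $\mu$, invoking the classical Riemann--Lebesgue lemma, and finishing with Lemma~\ref{lem:a}) yields $\widehat{\mu}\in\cM^\infty_0(\widehat{G})$. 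That route never requires $\widehat{\mu}$ to be finite.
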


\section{Few more examples}\label{sect:exa}

In this section, we will look at few more examples and provide a simple method of creating measures vanishing at infinity.

We start by reviewing an example from \cite{BF}.

\begin{example}\label{ex:BF}\cite[Remark 1.15]{BF} For each positive integer $n$ and each integer $0 \leq k \leq 2^n-1$, define $f(x)=(-1)^k$ for all $x \in [n+\frac{k}{2^n}, n+\frac{k+1}{2^n})$. Next, define $f(x)=0$ at all other $x \in \R$.

Then $\mu=f \lambda \in \cM^\infty_0(\R)$.
\end{example}

\begin{example}\label{ex:b} Let $\mu_n=\frac{1}{n} \sum_{k=1}^n \delta_{n+\frac{k}{n}}$. Define
\begin{displaymath}
\mu=\lambda-\sum_{n=1}^\infty (\mu_n+\widetilde{\mu_n})
\end{displaymath}
Then $\mu \in \cM^\infty_0(\R)$. This is true for the following reason.

By the convergence of the Riemann Sums for the integrals of continuous function, in the vague topology we have
\begin{displaymath}
\lim_{n\to\infty} \left( \frac{1}{n} \sum_{k=1}^n \delta_{\frac{k}{n}} \right) \,=\, \lambda_{[0,1]} \,.
\end{displaymath}
This shows that $\lim_{x \to \infty} \mu*f(x)=0$, which proves the claim.
\end{example}

\begin{example}
Consider the sequence of measures $(\mu_n)_{n\in\N}$ defined by
\begin{displaymath}
\mu_n:=\delta_0+\lambda|_{[n,n+1]}.
\end{displaymath}
Applying \cite[Thm. 1.3.3(b)]{Rud}, we obtain
\begin{displaymath}
\begin{split}
\reallywidehat{(\mu_n*\widetilde{\mu_n})}
  &=\widehat{\mu_n}\cdot \widehat{\widetilde{\mu_n}} = |\widehat{\mu_n}|^2
     = \big| \widehat{\delta_0} + \widehat{\lambda|_{[n,n+1]}}\big|^2                            \\
  &= \big| 1+\e^{-\pi\im x(2n+1)}\operatorname{sinc}(\pi x)\big|^2 \, \lambda              \\
  &= \left( 1+2\cos\big(\pi x(2n+1)\big)\operatorname{sinc}(\pi x) +
        \operatorname{sinc}^2(\pi x)\right)\, \lambda.
\end{split}
\end{displaymath}
Now, if we define
\begin{displaymath}
\mu:=\sum_{n=0}^{\infty}\frac{1}{2^n}\, \mu_n*\widetilde{\mu_n},
\end{displaymath}
we get
\begin{displaymath}
\widehat{\mu}=\sum_{n=0}^{\infty} \frac{1}{2^n} \left( 1+2\cos
\big(\pi x(2n+1)\big)\operatorname{sinc}(\pi x) +
        \operatorname{sinc}^2(\pi x)\right)\, \lambda
\end{displaymath}
Thus, $\widehat{\mu}$ is absolutely continuous and, by Theorem
~\ref{thm:a}, $\mu\in\mc M_0^{\infty}(\R)$.
\end{example}

We complete the section by introducing a large class of measures vanishing at infinity, which contains the measures from Example \ref{ex:a}, Example \ref{ex:BF} and Example \ref{ex:b}.

\begin{prop}\label{prop: generating measures at infty} Let $(\mu_n)_{n\in\N}$ a sequence of measures, $K \subseteq G$ be a compact set, and $(t_n)_{n\in\N}$ a sequence in $G$ with the following properties:
\begin{itemize}
  \item [(i)] $\operatorname{supp}(\mu_n) \subseteq K$ for all $n$.
  \item [(ii)] $\left| \mu_n \right|(K)$ is bounded.
  \item [(iii)] In the vague topology, we have
  \begin{displaymath}
\lim_{n\to\infty} \mu_n =0 \,.
  \end{displaymath}
  \item [(iv)] The $t_n$ are distinct and the set $\{ t_n | n \in \N \}$ is uniformly discrete.
\end{itemize}
Then, the measure
\begin{displaymath}
  \mu:= \sum_{n=1}^{\infty} T_{t_n} \mu_n \,,
\end{displaymath}
is vanishing at infinity, where, for a measure $\mu$, the translation $T_t \mu$ is defined via
\[
[T_t(\mu)](f):= \mu(T_{-t}f) \,.
\]
\end{prop}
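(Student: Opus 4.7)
The plan is to first verify that $\mu$ is a well-defined translation bounded measure, and then establish $\mu * f \in C_0(G)$ for every $f \in C_c(G)$ by combining local finiteness (from uniform discreteness) with a uniform vague convergence estimate proved via Arzel\`a--Ascoli. For well-definedness, fix an open neighborhood $U$ of $0$ witnessing the uniform discreteness of $\{t_n\}$; for any compact $L \subseteq G$ the set $\{n : (t_n + K) \cap L \neq \varnothing\} \subseteq \{n : t_n \in L - K\}$ is finite, since a uniformly discrete set meets a compact set in finitely many points. Hence $\mu$ is locally a finite sum of measures, so it is a well-defined Radon measure, and an analogous counting combined with (ii) gives translation boundedness.

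Now fix $f \in C_c(G)$, set $K_f := \operatorname{supp}(f)$, and write
\[
(\mu * f)(x) \;=\; \sum_{n=1}^{\infty} \int_K f(x - t_n - s)\, d\mu_n(s).
\]
The $n$-th summand vanishes unless $t_n \in x - K - K_f$, and by uniform discreteness the number of non-zero summands is bounded by some $N_0$ independent of $x$. The heart of the argument is the uniform vague convergence claim
\[
\lim_{n \to \infty} \sup_{y \in G}\, \left| \int f(y - s)\, d\mu_n(s) \right| \;=\; 0.
\]
To prove it, note that the family $\{s \mapsto f(y-s)|_K : y \in G\}$ is uniformly bounded by $\|f\|_\infty$ and equicontinuous (since $f$ is uniformly continuous), hence relatively compact in $C(K)$ by Arzel\`a--Ascoli. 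For $\eta > 0$, cover the family by finitely many $\eta$-balls centered at $g_1, \ldots, g_M \in C(K)$ and extend each $g_i$ by a cutoff to $\widetilde g_i \in C_c(G)$; integration against $\mu_n$ is insensitive to this extension because $\operatorname{supp}(\mu_n) \subseteq K$. By (iii), there is $n_\eta$ with $|\int \widetilde g_i\, d\mu_n| < \eta$ for all $i$ and all $n \geq n_\eta$, so for any $y \in G$, picking $i$ with $\|f(y - \cdot)|_K - g_i\|_\infty < \eta$ yields
\[
\left| \int f(y - s)\, d\mu_n(s) \right| \;\leq\; \left| \int \widetilde g_i\, d\mu_n \right| + \eta\, C \;<\; \eta(1 + C)
\]
for $n \geq n_\eta$, where $C := \sup_m |\mu_m|(K) < \infty$ by (ii). Since $\eta$ was arbitrary, the claim follows.

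Given $\epsilon > 0$, the claim yields some $n_\epsilon$ for which the supremum above is below $\epsilon / N_0$ whenever $n \geq n_\epsilon$. Set $K_\epsilon := \bigcup_{n < n_\epsilon} (t_n + K + K_f)$, a finite union of compact sets and therefore compact. For $x \notin K_\epsilon$, every non-zero summand in the expression for $(\mu * f)(x)$ has index $n \geq n_\epsilon$, and there are at most $N_0$ such summands, whence $|(\mu * f)(x)| < \epsilon$. Combined with $\mu * f \in C_U(G)$ coming from translation boundedness, this gives $\mu * f \in C_0(G)$, so $\mu \in \cM_0^{\infty}(G)$. The main obstacle is the uniform vague convergence step: pointwise vague convergence $\int g\, d\mu_n \to 0$ alone gives no control uniform in $y$, but the equicontinuity of the translates of $f$ reduces the question, via Arzel\`a--Ascoli, to a finite family of test functions, where assumption (iii) suffices.
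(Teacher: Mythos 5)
Your proof is correct and follows essentially the same strategy as the paper's: uniform discreteness bounds the number of nonzero summands by a constant independent of $x$, the compact set $K_\eps$ is taken to be a finite union of translates $t_n+K+K_f$, and the vague convergence of $(\mu_n)$ is upgraded to an estimate uniform over all translates of $f$ by a compactness argument that reduces to finitely many test functions. The only cosmetic difference is that you package that last step via Arzel\`a--Ascoli applied to the family $\{f(y-\cdot)|_K\}$, whereas the paper covers $K+K_0$ by finitely many $U$-translates using the (equi-)uniform continuity of the functions $\mu_n*f$; these are the same argument in different clothing, and your version is, if anything, slightly more carefully stated.
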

\begin{proof}
First, it follows immediately from the assumptions that
\begin{displaymath}
\| \mu \|_K < \infty
\end{displaymath}
and hence $\mu$ is a translation bounded measure.

Fix some arbitrary $f \in C_c(G)$ and let $\eps >0$. Let $K_0$ be a compact set such that $\operatorname{supp}(f) \subseteq K_0=-K_0$.  By (iv), there exists some $M\ge0$ such that for all $x \in G$ there are at most $M$ elements in
\begin{displaymath}
\{ t_n\, |\, n \in \N \} \cap (x+K_0-K) \,.
\end{displaymath}
Since $\mu$ is translation bounded, $\mu*f$ is uniformly continuous. Therefore, there exists some open neighbourhood $U$ of $0$ such that, for all $t,s \in G$ with $t-s \in U$, we have
\begin{displaymath}
\left| \mu*f(t)-\mu*f(s) \right| < \frac{\eps}{2M} \,.
\end{displaymath}
Next, by the compactness of $K+K_0$, we  can find a finite set $x_1,...,x_m$ such that
\begin{displaymath}
K+K_0 \subseteq \bigcup_{j=1}^m x_j+U \,.
\end{displaymath}

Finally, by the vague convergence of $\mu_n$, there exists some $N_\eps$ such that, for all $n >N_\eps$ and all $1 \leq j \leq m$, we have
\begin{displaymath}
\left| \mu_n(T_{x_j} f^\dagger) \right| < \frac{\eps}{2M} \,.
\end{displaymath}

Let $K_\eps := \bigcup_{n=1}^{N_\eps} t_n+(K+K_0)$. Then, $K_\eps$ is a compact set. We show that $\left| \mu*f \right| < \eps$ outside $K_\eps$. Let $x \notin K_\eps$. Then
\begin{equation}\label{EQ1}
\left| \mu*f(x) \right| \leq \sum_{n} \left| T_{t_n} \mu_n*f(x) \right|  = \sum_{n \in F} \left| T_{t_n} \mu_n*f(x) \right|
\end{equation}
where $F:= \{ n\, |\, T_{t_n} \mu_n*f(x) \neq 0 \}$. Since $\{ t_n\, |\, n \in F\} \subseteq \{ t_n \} \cap (K_0-K)$, we know that $|F| \leq M$.

Moreover, for each $n \in F$, we have $n > N_\eps$, therefore
\begin{displaymath}
\left| \mu_n*f(x_j) \right|=\left| \mu_n(T_{x_j} f^\dagger) \right| < \frac{\eps}{2M} \,.
\end{displaymath}
Finally, if $n \in F$, we have $x -t_n \in K+K_0 =  \bigcup_{j=1}^m x_j+U$. Therefore, there exists some $j$ and $u \in U$ such that
\begin{displaymath}
x-t_n =x_j+u \,.
\end{displaymath}
Hence,
\begin{displaymath}
\left| T_{t_n} \mu_n*f(x) \right|  = \left| \mu*f (x_j+u) \right| \leq \left| \mu*f (x_j) \right| + \frac{\eps}{2M} < \frac{\eps}{2M } + \frac{\eps}{2M } =\frac{\eps}{M}.
\end{displaymath}
Therefore, by (\ref{EQ1}), we get
\begin{displaymath}
\left| \mu*f(x) \right| < \frac{\eps}{M} |F| \leq \eps \,.
\end{displaymath}
This completes the proof.
\end{proof}

\begin{remark}
\begin{enumerate}
\item[(i)] Setting $t_{n}=n$, $K=[0,1]$ and
\[
\mu_n =\sum_{k=0}^{2^n-1} (-1)^k \lambda_{[n+\frac{k}{2^n}, n+\frac{k+1}{2^n})} \,,
\]
in Proposition \ref{prop: generating measures at infty}, we get exactly the Example \ref{ex:BF}.
\item[(ii)] Setting $t_{2n}=n, t_{2n+1}=-n$, $K=[-1,1]$ and
\begin{displaymath}
\mu_{2n}=\delta_{\frac{1}{n}}-\delta_0 \quad ; \quad \mu_{2n+1}=\delta_{-\frac{1}{n}}-\delta_0 \,,
\end{displaymath}
in Proposition \ref{prop: generating measures at infty}, we get exactly the Example \ref{ex:a}.
\item[(iii)] Let us denote by $\lambda_{[0,1)}$ the restriction of the Lebesgue measure to $[0,1)$. Setting exactly as in (i) $t_{2n}=n,\, t_{2n+1}=-n$, $K=[-1,1]$ and
\begin{displaymath}
\mu_{2n}=\lambda_{[0,1)}-\left(\frac{1}{n}\sum_{k=1}^n \delta_{\frac{k}{n}}\right)  \quad ; \quad \mu_{2n+1}=\widetilde{\mu_{2n}} \,,
\end{displaymath}
for $n\ge0$ in Proposition \ref{prop: generating measures at infty}, we get exactly the Example \ref{ex:b}.\exend
\end{enumerate}
\end{remark}

\section{Connection to Rajchman Measures}\label{sect:raj}

In this section, we will see that there is a strong connection between measures vanishing at infinity and Rajchman measures. Therefore, let us recall the definition of a Rajchman measure \cite{RuLy}.

\begin{definition}
A finite measure $\mu$ is called a \emph{Rajchman measure} if $\widehat{\mu} \in C_0(\widehat{G})$.

We denote the class of Rajchman measures by $\cR(G)$.
\end{definition}

\begin{remark} It is well-known that every finite measure which is absolutely continuous is a Rajchman measure. The converse is not true. There are Rajchman measures that are not absolutely continuous, see \cite{RuLy} and references therein.
\exend
\end{remark}

A stronger version of Theorem \ref{thm:a} is given by the following theorem, which shows the connection between measures vanishing at infinity and Rajchman measures.

\begin{theorem}\label{thm:c} Let $\mu$ be a Fourier transformable measure. Then, $\mu \in \cM_0^\infty(G)$ if and only if for all $f \in C_c(G)$ we have
\begin{displaymath}
\big| \widehat{f}\, \big|^2 \widehat{\mu} \in \cR(\widehat{G}) \,.
\end{displaymath}
\end{theorem}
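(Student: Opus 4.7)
The plan is to reduce the claim to Lemma~\ref{lem:a} by identifying, for each $f \in C_c(G)$, the finite measure $|\widehat{f}|^2 \widehat{\mu}$ on $\widehat{G}$ as the Fourier transform of $\mu*(f*\widetilde{f})$, and then observing that $|\widehat{f}|^2 \widehat{\mu}$ is Rajchman precisely when $\mu*(f*\widetilde{f}) \in C_0(G)$. The two directions of the theorem follow from this equivalence together with a polarisation argument.

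Fix $f \in C_c(G)$. Since $f*\widetilde{f} \in K_2(G)$, by \cite[Thm.~3.1]{ARMA1} the measure $\mu*(f*\widetilde{f})$ is Fourier transformable with
\begin{displaymath}
\reallywidehat{\mu*(f*\widetilde{f})} = |\widehat{f}|^{2}\, \widehat{\mu}.
\end{displaymath}
Exactly as in the proof of Theorem~\ref{thm:a}, the very definition of Fourier transformability forces $\widehat{f} \in L^{2}(|\widehat{\mu}|)$, so $|\widehat{f}|^{2}\,\widehat{\mu}$ is a finite measure on $\widehat{G}$. Being finite, it is itself Fourier transformable, so $\mu*(f*\widetilde{f})$ is twice Fourier transformable, and the standard double-Fourier-transform identity $\reallywidehat{\reallywidehat{\nu}} = \nu^{\dagger}$ for such measures gives
\begin{displaymath}
\reallywidehat{|\widehat{f}|^{2}\, \widehat{\mu}} = \bigl(\mu*(f*\widetilde{f})\bigr)^{\dagger}.
\end{displaymath}
Because the reflection $\dagger$ preserves $C_0(G)$, this yields the key equivalence
\begin{displaymath}
|\widehat{f}|^{2}\, \widehat{\mu} \in \cR(\widehat{G}) \iff \mu*(f*\widetilde{f}) \in C_0(G).
\end{displaymath}

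From here the two directions are essentially formal. If $\mu \in \cM_0^{\infty}(G)$, then $\mu*(f*\widetilde{f}) \in C_0(G)$ for every $f \in C_c(G)$, so $|\widehat{f}|^{2}\widehat{\mu}$ is Rajchman. Conversely, if $|\widehat{f}|^{2}\widehat{\mu} \in \cR(\widehat{G})$ for every $f \in C_c(G)$, the equivalence above produces $\mu*(f*\widetilde{f}) \in C_0(G)$ for all $f \in C_c(G)$. By the polarisation identity $K_2(G) = \operatorname{Span}\{f*\widetilde{f} : f \in C_c(G)\}$ and the linearity of convolution, this extends to $\mu*g \in C_0(G)$ for every $g \in K_2(G)$. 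Lemma~\ref{lem:a} then delivers $\mu \in \cM_0^{\infty}(G)$, using the standard fact that every Fourier transformable measure is translation bounded.

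The only point that is not a direct quotation of the preceding material is the double-Fourier-transform identity $\reallywidehat{\reallywidehat{\mu*(f*\widetilde{f})}} = (\mu*(f*\widetilde{f}))^{\dagger}$, i.e., the inversion formula for a twice Fourier transformable measure; this is classical in the framework of \cite{ARMA1,MoSt} and is precisely what underlies the companion statement in Corollary~\ref{cor:a}. Everything else is routine.
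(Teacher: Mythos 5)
Your proposal is correct and follows essentially the same route as the paper: both identify $\mu*f*\widetilde{f}$ (up to reflection) as the inverse Fourier transform of the finite measure $|\widehat{f}|^2\widehat{\mu}$, translate the Rajchman condition into $\mu*f*\widetilde{f}\in C_0(G)$, and conclude via Lemma~\ref{lem:a}. The paper states the inversion step more tersely, while you spell out the double-transform identity and the polarisation reduction, but the argument is the same.
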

\begin{proof}
Let $f \in C_c(G)$. Then, since $\mu$ is Fourier transformable, $\mu*f*\widetilde{f}$ is the inverse Fourier transform of the finite measure $\big| \widehat{f}\, \big|^2 \widehat{\mu}$. Therefore, by the definition of $\cR(\widehat{G})$ we get that $\mu*f*\widetilde{f} \in C_0(G)$ if and only if $\big| \widehat{f}\, \big|^2 \widehat{\mu} \in \cR(\widehat{G})$. The claim follows now from Lemma \ref{lem:a}.
\end{proof}

\section*{Acknowledgments}

We would like to thank Michael Baake for pointing out the connection to Rajchman measures, and for some suggestions which improved the quality of the manuscript. The first author was supported by the German Research Foundation (DFG) via the Collaborative Research Centre (CRC 701) through the faculty of Mathematics, Bielefeld University. The second author was supported by NSERC with the grant number 2014-03762, and the authors are grateful for the support. The project was started when TS visited Department of Mathematics, Trent University, and was completed when NS visited Department of Mathematics, Bielefeld University, and the authors would like to thank the departments for hospitality.

\end{document}